\documentclass[envcountsame]{llncs}
  
\usepackage{svn-multi}
\usepackage{amsmath}
\usepackage{amssymb}
\usepackage{verbatim} 
\usepackage{stmaryrd}
\usepackage{paralist}
\usepackage{cite}
\usepackage{appendix}
\usepackage{url}
\usepackage[]{units}
\usepackage{tikz}
\usepackage{subfig}
\usepackage{array}
\usepackage{color}
\usepackage{empheq}
\usetikzlibrary{calc,arrows,positioning}
\allowdisplaybreaks

\newif\ifdraft\drafttrue

%%%  IMPORT: Special symbols from package mathabx (rationale: negation symbol too bad of mathabx) %%%
\DeclareFontFamily{U}{mathb}{\hyphenchar\font45}
\DeclareFontShape{U}{mathb}{m}{n}{
      <5> <6> <7> <8> <9> <10> gen * mathb
      <10.95> mathb10 <12> <14.4> <17.28> <20.74> <24.88> mathb12
}{}
\DeclareSymbolFont{mathb}{U}{mathb}{m}{n}
\DeclareMathSymbol{\sqdoublecup} {2}{mathb}{"5F} % From mathabx.dcl
\DeclareMathSymbol{\boxplus} {2}{mathb}{"60} % mathabx.dcl

%%%% Add possibility to continue examples %%%%%%%%

\usepackage[amsmath]{ntheorem}
\makeatletter
\renewtheoremstyle{plain}%
  {\item[\hskip\labelsep \theorem@headerfont ##1\ ##2\theorem@separator]}%
  {\item[\hskip\labelsep \theorem@headerfont ##1\ ##2, ##3\theorem@separator]}
\makeatother

%%%% Macros for technical part %%%%%%%%%%%%%%%%%%%%%%%%%%%%%%%%%%%%%%%%%%%%
\newcommand{\Pow}[1]{\mathop{\mbox{$\mathcal P$}} ({#1})   } 
\newcommand{\dist}[1]{\mathop{\mbox{$\mathcal D$}} ({#1})   } % distributions

\newcommand{\support}[1]{\lceil{#1}\rceil}
\newcommand{\eDis}{\mathop{\varepsilon}}
\newcommand{\Op}[1]{[\![#1]\!]}

\newcommand{\trans}[1][]{\xrightarrow{\, {#1} \, }}

\newcommand{\Act}{A}

 % Naturals
 % Naturals with infinity
 % Integers
 % Rationals
 % Reals

%\newcommand\dst{\mathbf{d}_{st}}
%\newcommand\dl{\mathbf{d}_l}
%\newcommand\Op[1]{\llbracket #1 \rrbracket}

\newcommand{\const}[1]{\underline{#1}}
\newcommand{\dia}[1]{\langle #1\rangle}
\newcommand{\pdist}[1]{\overline{#1}}
\newcommand{\tdists}{\it tdists}
\newcommand{\sem}[2]{\llbracket{#1}\rrbracket({#2})}
\newcommand{\rplus}{\oplus}

\DeclareMathOperator{\Kantorovich}{\mathit{K}}
\DeclareMathOperator{\Hausdorff}{\mathit{H}}

\DeclareMathOperator{\der}{\mathit{der}}
\DeclareMathOperator{\cc}{\mathit{cc}}

\newcommand{\tr}{{\it tr}}
\newcommand{\Prob}{{\it Pr}}
\newcommand{\strong}{\ensuremath{\mathit{sb}}}
\newcommand{\convex}{\ensuremath{\mathit{cb}}}

\DeclareMathOperator{\traced}{\mathbf{d}}
\DeclareMathOperator{\bisimd}{\mathbf{d}}
\DeclareMathOperator{\tracedist}{\traced_{\it t}}
\DeclareMathOperator{\tracedistl}{\traced_{\it t}^l}
\DeclareMathOperator{\bisimddist}{\bisimd_{\it db}}
\DeclareMathOperator{\bisimddistl}{\bisimd_{\it db}^{ld}}
\DeclareMathOperator{\bisimdstrong}{\bisimd_{\strong}}
\DeclareMathOperator{\bisimdconvex}{\bisimd_{\convex}}

\DeclareMathOperator{\bisimdlogicstrong}{\bisimd_{\strong}^{ls}}
\DeclareMathOperator{\bisimddlogicstrong}{\bisimd_{\strong}^{ld}}
\DeclareMathOperator{\bisimdlogicconvex}{\bisimd_{\convex}^{l}}
\DeclareMathOperator{\bisimdtop}{\mathbf{1}}
\DeclareMathOperator{\bisimdbot}{\mathbf{0}}

\DeclareMathOperator{\functor}{\mathit{F}}

\DeclareMathOperator{\functorconvex}{\functor_\convex}
\DeclareMathOperator{\functorany}{\functor}

\DeclareMathOperator{\logic}{\mathcal{L}}
\DeclareMathOperator{\logicstate}{\logic^{S}}
\DeclareMathOperator{\logicdist}{\logic^{D}}

\newcommand{\leaveout}[1]{}

\usepackage[ddmmyyyy]{datetime}
\usepackage{datenumber}
\usepackage{expl3}
\usepackage{color}

% workaround to make bussproofs compatible with llncs
\usepackage{etex}
\reserveinserts{30}

% Autoinsert qed after every proof
\let\doendproof\endproof
\renewcommand\endproof{~\hfill\qed\doendproof} 
\let\doendexample\endexample
\newenvironment{apx-proof}[1] 
        {\noindent \rm \textbf{Proof of #1.}} 
        {\qed}
\renewcommand\endexample{~\hfill$\qed$\doendexample}

% Date and time marcros for time zone independent work in progress information
\makeatletter
\newcommand*\getnumtz[2]{%
    \expandafter\@getnumtz\the\numexpr 0#2\relax
        \empty\relax\relax\@nnil{#1}{#2}%
}

\def\@getnumtz#1\relax#2\relax#3\@nnil#4#5{%
    \ifx\relax#2\relax
        \edef#4{#1}%
    \else
        \begingroup\expandafter\endgroup
        \expandafter\let\expandafter#4\csname getnumtz@#5\endcsname%
    \fi
}

\newcommand*\definetz[2]{%
    \@namedef{getnumtz@#1}{#2}%
}%

\definetz{Z}{+0000}
\definetz{GMT}{+0000}
\definetz{UTC}{+0000}
\definetz{CET}{+0100}
\definetz{CEST}{+0200}

\newcommand*\converttimezone[9]{%
    % #1 = macro which receives result
    % #2 = year
    % #3 = month
    % #4 = day
    % #5 = hour
    % #6 = minute
    % #7 = second
    % #8 = original timezone
    % #9 = target timezone
    \begingroup
    % Store date:
    \c@myyear=\numexpr#2\relax
    \c@mymonth=\numexpr#3\relax
    \c@myday=\numexpr#4\relax
    \c@myhour=\numexpr#5\relax
    \c@myminute=\numexpr#6\relax
    \c@mysecond=\numexpr#7\relax
    % Get numeric timezones
    \getnumtz\origtz{#8}%
    \getnumtz\targettz{#9}%
    % Calculate resulting hour-minute combination (could be improved)
    \c@myhourminute=\numexpr (#5)*100+(#6) - \origtz + \targettz \relax
    \c@myhour=\numexpr \c@myhourminute / 100\relax% integer devision
    \c@myminute=\numexpr \c@myhourminute - \c@myhour*100\relax
    \loop\ifnum\c@myminute<\z@
        \advance\c@myhour by \m@ne
        \advance\c@myminute by 60\relax
    \repeat
    \loop\ifnum\c@myminute>59\relax
        \advance\c@myhour by \@ne
        \advance\c@myminute by -60\relax
    \repeat
    % Check if the day boundary has been crossed and adjust day:
    \ifnum\c@myhour<0\relax
        \setmydatenumber{mydatenumber}{\value{myyear}}{\value{mymonth}}{\value{myday}}%
        \advance\c@mydatenumber by \m@ne
        \setmydatebynumber{\value{mydatenumber}}{myyear}{mymonth}{myday}%
        \advance\c@myhour by 24\relax
    \else\ifnum\c@myhour>23\relax
        \setmydatenumber{mydatenumber}{\value{myyear}}{\value{mymonth}}{\value{myday}}%
        \advance\c@mydatenumber by \@ne
        \setmydatebynumber{\value{mydatenumber}}{myyear}{mymonth}{myday}%
        \advance\c@myhour by -24\relax
    \fi\fi
    \edef\@tempa{\unexpanded{#1}{\themyyear}{\themymonth}{\themyday}{\themyhour}{\themyminute}{\themysecond}{#9}}%
    \expandafter
    \endgroup\@tempa
}

\newcounter{myyear}
\newcounter{mymonth}
\newcounter{myday}
\newcounter{myhour}
\newcounter{myminute}
\newcounter{mysecond}
\newcounter{mydatenumber}
\makeatother

\ExplSyntaxOn
% save original \today since the new one is not expandable:
\cs_set_eq:NN \lgt_file_date: \today

% avoid \currenttime since it's not expandable:
\cs_new_nopar:Npn \lgt_file_time:
  {
    \lgt_get_time:N \currenthour :
    \lgt_get_time:N \currentminute :
    \lgt_get_time:N \currentsecond
  }

% add leading 0 if necessary:
\cs_new_nopar:Npn \lgt_get_time:N #1
  {
    \int_compare:nNnT { #1 } < { 10 } { 0 }
    \int_use:N #1
  }

\ExplSyntaxOff

%%%% Macros for annotations %%%%%%%%%%%%%%%%%%%%%%%%%%%%%%%%%%%%%%%%%%%%

\definecolor{lightblue}{RGB}{224,224,255}
\definecolor{lightred}{RGB}{255,224,224}
\definecolor{lightgreen}{RGB}{224,255,224}
\definecolor{lightyellow}{RGB}{255,255,224}
\definecolor{lightpurple}{RGB}{255,224,255}
\definecolor{darkerred}{RGB}{64,0,0}
\definecolor{darkred}{RGB}{128,0,0}
\definecolor{darkblue}{RGB}{0,0,128}
\definecolor{darkgreen}{RGB}{0,128,0}
\definecolor{darkpurple}{RGB}{128,0,128}
\definecolor{grey}{rgb}{0.745098,0.745098,0.745098}
\definecolor{lightgrey}{rgb}{0.9,0.9,0.9}
\definecolor{darkgrey}{rgb}{0.6,0.6,0.6}

\newcommand{\colorpar}[3]{\colorbox{#1}{\parbox{#2}{#3}}}

\newcommand{\marginremark}[3]{\marginpar{\colorpar{#2}{\linewidth}{\color{#1}#3}}}

\makeatletter
\def\THICKhrulefill{\leavevmode \leaders \hrule height 5pt\hfill \kern \z@}
\makeatother

\newcommand{\remarkDG}[1]{\marginremark{darkred}{lightred}{\tiny{[DG]~ #1}}}
\newcommand{\remarkYD}[1]{\marginremark{darkgreen}{lightgreen}{\tiny{[YD]~ #1}}}

\ifdraft
\pagestyle{headings}
\else
\renewcommand{\remarkDG}[1]{}
\renewcommand{\remarkYD}[1]{}
\fi

\begin{document}

\title{Modal Characterisations of Behavioral Pseudometrics}

\author{
Yuxin Deng\inst{1} \and
Wenjie Du\inst{2} \and
Daniel Gebler\inst{3} 
}

\institute{
East China Normal University \and
Shanghai Normal University \and
VU University Amsterdam 
\\ \today
}

\maketitle

\begin{abstract}
For the model of probabilistic labelled transition systems that allow for the co-existence of nondeterminism and probabilities, we present two notions of bisimulation metrics: one is state-based and the other is distribution-based. We provide a sound and complete modal characterisation for each of them, using real-valued modal logics based on the Hennessy-Milner logic. The logic for characterising the state-based metric is much simpler than an earlier logic by Desharnais et al. as it uses only two non-expansive operators rather than the general class of non-expansive operators.
\end{abstract}

\begin{keywords}
Probabilistic labelled transition systems,
behavioral pseudometrics,
real-valued modal logics
\end{keywords}

% ==========================================================================
\section{Introduction}\label{sec:introduction}
% ==========================================================================
Bisimulation is an important proof technique for establishing behavioural equivalences of concurrent systems.
In probabilistic concurrency theory, there are roughly two kinds of bisimulations: one is state-based because it is directly defined over states and then lifted to distributions, and the other is distribution-based as it is a relation between distributions. The former is originally defined in \cite{LS91} to represent a branching time semantics; the latter, as defined in \cite{HKK14,FZ14,DFD15}, is strictly coarser and represents a linear time semantics. 

In correspondence with those bisimulations, there are two notions of behavioural pseudometrics (simply called metrics). They are more robust ways of formalising behavioural similarity between formal systems than bisimulations because, particularly in the probabilistic setting, bisimulations are too sensitive to probabilities (a very small perturbation of the probabilities would render two systems non-bisimilar). A metric gives a quantitative measure of the distance between two systems and distance $0$ usually means that the two systems are bisimilar.  A logical characterisation of the state-based bisimulation metric for labelled Markov processes is given in \cite{DGJP04}. For a more general model of labelled concurrent Markov chains (LCMCs) that allow for the co-existence of nondeterminism and probabilities, a weak bisimulation metric is proposed in \cite{DJGP02}. Its logical characterisation uses formulae like $h\circ f$, where $f$ is a formula and $h$ can be any non-expansive operator on $[0,1]$, i.e. $|h(x)-h(y)|\leq |x-y|$ for any $x,y\in [0,1]$. %Non-expansive operators are mathematically elegant but its semantic interpretation for nondeterministic and probabilistic systems is not very intuitive. 
A natural question then arises: instead of the general class of non-expansive operators, is it possible to use only a few simple non-expansive operators without losing the capability of characterising the bisimulation metric?

In the current work, we give a positive answer to the above question.
We work in the framework of probabilistic labelled transition systems (pLTSs) that are essentially the same as LCMCs, so the interplay of nondeterminism and probabilities is allowed. We provide a modal characterisation of the state-based bisimulation metric closely in line with the classical Hennessy-Milner logic (HML) \cite{HM85}. Our variant of the HML makes use of state formulae and distribution formulae, which are formulae evaluated at states and distributions, respectively, and yield success probabilities. We use merely two non-expansive operators: negation ($\neg\phi$) and testing ($\phi\ominus p$). Negation is self-explanatory and the testing operator checks if a state satisfies a property with certain threshold probability. More precisely, if state $s$ satisfies formula $\phi$ with probability $q$, then it satisfies $\neg\phi$ with probability $1-q$, and  satisfies $\phi\ominus p$ with probability $q-p$ if $q>p$ and $0$ otherwise. In other words, we do not need the general classs of non-expansive operators because negation and testing, together with other modalities in the classical HML, are expressive enough to characterise bisimulation metrics. As regards to the characterisation of distribution-based bisimulation metric, we drop state formulae and  use distribution formulae only.

The rest of this paper is organised as follows. Section~\ref{sec:preliminaries} provides some basic concepts on pLTSs. Section~\ref{sec:bisim_metric} defines a two-sorted modal logic that leads to a sound and complete characterisation of the state-based bisimulation metric. Section~\ref{sec:dist} gives a similar characterisation for the distribution-based bisimulation metric. In Section~\ref{sec:relwork} we review some related work.
 Finally, we conclude in Section~\ref{sec:conclu}.

% ==========================================================================
\section{Preliminaries}\label{sec:preliminaries}
% ==========================================================================
Let $S$ be a countable set. A \emph{(discrete) probability subdistribution} over $S$ is a function $\Delta:S\rightarrow [0,1]$ with $\sum_{s\in S}\Delta(s)\leq 1$. It is a \emph{(full) distribution} if $\sum_{s\in S}\Delta(s)= 1$.
Let $\dist{S}$ denote the set of all distributions over $S$. A matching $\omega \in \dist{S \times S}$ for $(\Delta,\Theta) \in \dist{S} \times \dist{S}$ is given if $\sum_{t\in S} \omega(s,t)=\Delta(s)$ and $\sum_{s\in S} \omega(s,t)=\Theta(t)$ for all $s,t\in S$. We denote the set of all matchings for $(\Delta,\Theta)$ by $\Omega(\Delta,\Theta)$. 

A \emph{metric} $d$ over space $S$ is a distance function $d: S\times S\rightarrow\mathbb{R}_{\geq 0}$ satisfying: (i) $d(s,t)=0$ iff $s=t$ (isolation), (ii) $d(s,t)=d(t,s)$ (symmetry), (iii) $d(s,t)\leq d(s,u) + d(u,t)$ (triangle inequality), for any $s,t,u\in S$. If we replace (i) with $d(s,s)=0$ for all $s\in S$, we obtain a \emph{pseudometric}. In this paper we are interested in pseudometrics because two distinct states can still be at distance zero if their behaviour is similar. But for simplicity, we often use metrics for pseudometrics. A metric $d$ over $S$ is $c$-bounded if $d(s,t)\leq c$ for any $s,t\in S$, where $c\in \mathbb{R}_{\geq 0}$ is a positive real number.

Let $d\colon S \times S \to [0,1]$ be a metric over $S$. We lift it to be a metric over $\dist{S}$ by using the \emph{Kantorowich metric} \cite{KR58} $\Kantorovich(d)\colon \dist{S} \times \dist{S} \to [0,1]$ defined via a linear programming problem as follows:
\[
	\Kantorovich(d)(\Delta,\Theta) = \min_{\omega \in \Omega(\Delta,\Theta)} \sum_{s,t\in S}d(s,t) \cdot \omega(s,t)
\]
for $\Delta,\Theta \in \dist{S}$. 
The dual of the above linear programming problem is the following
\[\begin{array}{rl}
\max\ \sum_{s\in S}(\Delta(s)-\Theta(s)) x_s, \text{ subject to } & 0\leq x_s\leq 1 \\ 
&\forall s,t\in S\colon\ x_s-x_{t}\leq d(s,t)\;.
\end{array}\]
The duality theorem in linear programming guarantees that both problems have the same optimal value.

Let $\hat{d}\colon \dist{S} \times \dist{S} \to [0,1]$ be a metric over $\dist{S}$. We lift it to be a metric over the powerset of $\dist{S}$, written $\Pow{\dist{S}}$, by using the \emph{Hausdorff metric} $\Hausdorff(\hat{d})\colon \Pow{\dist{S}} \times \Pow{\dist{S}} \to [0,1]$ given as follows
\[
	\Hausdorff(\hat{d})(\Pi_1,\Pi_2) = 
		\max \{ 
			\adjustlimits\sup_{\Delta \in \Pi_1}\inf_{\Theta \in \Pi_2} \hat{d}(\Delta,\Theta),\ \
			\adjustlimits\sup_{\Theta\in \Pi_2}\inf_{\Delta\in \Pi_1} \hat{d}(\Theta,\Delta) 
		\}
\]
for all $\Pi_1,\Pi_2 \subseteq \dist{S}$, whereby $\inf \emptyset = 1$ and $\sup \emptyset = 0$.

Probabilistic labelled transition systems (pLTSs) generalize labelled transition systems (LTSs) by allowing for probabilistic choices in the transitions. We consider pLTSs (or essentially \emph{simple probabilistic automata} \cite{Seg95a}) with countable state spaces.
\begin{definition}
A \emph{probabilistic labelled transition system} is a triple $(S,\Act,{\trans})$, where $S$ is a countable set of states, $\Act$ is a countable set of actions, and the relation ${\trans} \subseteq S \times \Act \times \dist{S}$ is a transition relation.
\end{definition}
We write $s \trans[a] \Delta$ for ${(s,a,\Delta)} \in {\trans}$ and define $\der(s,a) = \{\Delta \mid s \trans[a] \Delta \}$ as the set of all $a$-successor distributions of $s$. A pLTS is \emph{image-finite} if for any state $s$ and action $a$ the set $\der(s,a)$ is finite. In the current work, we focus on image-finite pLTSs.

% ==========================================================================
\section{State-Based Bisimulation Metrics}\label{sec:bisim_metric}
% ==========================================================================

% No discount $\lambda \in (0,1]$ for now

We consider the complete lattice $([0, 1]^{S\times S},\sqsubseteq)$ defined by  $d \sqsubseteq d'$ iff $d(s, t) \le d'(s, t)$, for all $s, t \in S$. 
%\remarkDG{Earlier papers used the complete lattice of pseudo-metrics \cite{BW01a,DCPP06,DJGP02,DD11}}
For some $D \subseteq [0, 1]^{S\times S}$ the least upper bound is given by $(\bigsqcup D)(s,t) = \sup_{d\in D}d(s,t)$, and the greatest lower bound is given by $(\bigsqcap D)(s,t) = \inf_{d\in D}d(s,t)$ for all $s, t \in S$. The bottom element $\bisimdbot$ is the constant zero function $\bisimdbot(s,t)=0$ and the top element $\bisimdtop$ is the constant one function $\bisimdtop(s,t)=1$ for all $s,t \in S$. 
%We will use subscripts $\any \in \{\strong,\convex\}$ for state-based and convex bisimulation semantics, respectively.

% --------------------------------------------------------------------------
%\subsection{Coinductive characterization}
% --------------------------------------------------------------------------

\begin{definition}%[Strong bisimulation metric]
\label{def:sb}
A $1$-bounded metric $d$ on $S$ is a \emph{state-based bisimulation metric} if for all $s,t\in S$ and $\epsilon \in [0,1)$ with $d(s,t) \le \epsilon$, if $s \trans[a] \Delta$ then there exists some $t \trans[a] \Delta'$ with $\Kantorovich(d)(\Delta,\Delta') \le \epsilon$.
\end{definition}
The smallest (wrt. $\sqsubseteq$) state-based bisimulation metric, denoted by $\bisimdstrong$, is called \emph{ state-based bisimilarity metric}. 
%It assigns to each pair of processes the least possible distance. 
Its kernel is the state-based bisimilarity as defined in \cite{LS91,Seg95a}.

\begin{figure}[!t]
  \begin{center}
  \scalebox{0.5}{
\begin{tikzpicture}[->,>=stealth,auto,node distance=1.7cm,semithick,scale=1,every node/.style={scale=1}]
	\tikzstyle{state}=[minimum size=25pt,circle,draw,thick]
        \tikzstyle{triangleState}=[minimum size=0pt,regular polygon, regular polygon sides=4,draw,thick]
	\tikzstyle{stateNframe}=[]
	every label/.style=draw
        \tikzstyle{blackdot}=[circle,fill=black, minimum
        size=6pt,inner sep=0pt]
      \node[state](s){$s$};
      \node[state](s1)[below of=s]{$s_1$};
      \node[blackdot](m1)[below of=s1]{};
      \node[state](s2)[below  of=m1,xshift=-1cm]{$s_2$};
      \node[state](s3)[below  of=m1,xshift=1cm]{$s_3$};
      \node[state](s4)[below of=s2,xshift=1cm]{$s_4$};

      \node[state](t)[left of=s,xshift=10cm]{$t$};
      \node[blackdot](n1)[below of=t]{};
      \node[state](t1)[below of=n1,xshift=-1cm]{$t_1$};
      \node[state](t2)[below of=n1,xshift=1cm]{$t_2$};
      \node[state](t3)[below of=t1]{$t_3$};
      \node[state](t4)[below of=t2]{$t_4$};
      \node[state](t5)[below of=t3,xshift=1cm]{$t_5$};

     \path (s) edge     node[right] {$a$} (s1)
  	   (s1) edge[-]     node[right] {$b$} (m1)
	   (m1) edge[dashed]     node[left] {$\frac{1}{2}$}   (s2)
                edge[dashed]     node[right] {$\frac{1}{2}$}   (s3)
           (s2) edge             node[left] {$c$} (s4)
           (s3) edge             node[right] {$d$} (s4)
           (t) edge[-]             node {$a$} (n1)
           (n1) edge[dashed]             node[left] {$\frac{1}{2}$} (t1)
                edge[dashed]             node[right] {$\frac{1}{2}$} (t2)
  	   (t1) edge     node[left] {$b$} (t3)
  	   (t2) edge     node[right] {$b$} (t4)
  	   (t3) edge     node[left] {$c$} (t5)
  	   (t4) edge     node[right] {$d$} (t5);
\end{tikzpicture}
}
\end{center}
\caption{\label{fig:exam1} $\bisimdstrong(s,t)=\frac{1}{2}$}
\end{figure}
\begin{example}\label{ex:exam1}
In this example, we calculate the distance between states $s$ and $t$ in Figure~\ref{fig:exam1}. Firstly, observe that $\bisimdstrong(s_2,t_3)=0$ because $s_2$ is bisimilar to $t_3$ while $\bisimdstrong(s_3,t_3)=1$ because the two states $s_3$ and $t_3$ perform completely different actions. Secondly, let $\Delta=\frac{1}{2}\pdist{s_2}+\frac{1}{2}\pdist{s_3}$ and $\Theta=\pdist{t_3}$. We see that 
\[\begin{array}{rcl}
\Kantorovich(\bisimdstrong)(\Delta,\Theta) & = & \min_{\omega \in \Omega(\Delta,\Theta)} \bisimdstrong(s_2,t_3) \cdot \omega(s_2,t_3) + \bisimdstrong(s_3,t_3) \cdot \omega(s_3,t_3)\\
& = & \min_{\omega \in \Omega(\Delta,\Theta)} 1 \cdot \omega(s_2,t_3) + 0 \cdot \omega(s_3,t_3)\\
& = & \min_{\omega \in \Omega(\Delta,\Theta)} 1 \cdot \frac{1}{2} + 0 \cdot \frac{1}{2}\\
& = & \frac{1}{2}
\end{array}\]
It follows that $\bisimdstrong(s_1,t_1)=\frac{1}{2}$. Similarly, we get $\bisimdstrong(s_1,t_2)=\frac{1}{2}$. Then it not difficult to see that
\[\Kantorovich(\bisimdstrong)(\pdist{s_1},\frac{1}{2}\pdist{t_1}+\frac{1}{2}\pdist{t_2}) ~=~ \bisimdstrong(s_1,t_1)\cdot\frac{1}{2} + \bisimdstrong(s_1,t_2)\cdot\frac{1}{2} ~=~ \frac{1}{2}\]
from which we finally obtain $\bisimdstrong(s,t)=\frac{1}{2}$.
\end{example}

% --------------------------------------------------------------------------
%\subsection{Fixed point characterization}
% --------------------------------------------------------------------------
The above coinductively defined bisimilarity metric can be reformulated as a fixed point of a monotone functor. Let us define the functor $\functorany\colon [0,1]^{S \times S} \to [0,1]^{S \times S}$ for $d\colon S \times S \to [0,1]$ and $s,t\in S$ by
\[
	\functorany(d)(s,t) = \sup_{a\in \Act} \{ \Hausdorff(\Kantorovich(d))(\der(s,a), \der(t,a)) \}\,.
\]
%\remarkDG{Functor differs from Lemma 3.8 of \cite{DJGP02} wrt. treatment of external nondeterminism (i.e. iteration over different $a\in\Act$). Notation motivated by \cite{vB12}.}
%
%Let $\functorany$ stands for $\functorstrong$ or $\functorconvex$. 
It can be shown that $\functorany$ is monotone and its least fixed point is defined by $\bigsqcap d_i$, where $d_0 = \bisimdbot$ and $d_{i+1} = \functorany(d_i)$ for all $i\in\mathbb{N}$.

\begin{proposition}
For image-finite pLTSs, the strong bisimilarity metric is the least fixed point of $\functorany$.
\hfill\qed
\end{proposition}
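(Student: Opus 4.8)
The plan is to show that, among $1$-bounded pseudometrics, the state-based bisimulation metrics are exactly the pre-fixed points of $\functorany$, and then to read off the claim from the Knaster--Tarski theorem. Concretely, one establishes an equivalence that translates Definition~\ref{def:sb} into the order-theoretic inequality $\functorany(d)\sqsubseteq d$, checks that $\functorany$ preserves the class of $1$-bounded pseudometrics so that its least fixed point stays in that class, and finally matches the least fixed point with $\bisimdstrong$.

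The core step is the equivalence: \emph{for a $1$-bounded pseudometric $d$ on $S$, $d$ is a state-based bisimulation metric iff $\functorany(d)\sqsubseteq d$.} Unfolding $\functorany$, $\Hausdorff$ and $\Kantorovich$, the inequality $\functorany(d)(s,t)\le d(s,t)$ amounts to requiring, for every action $a$ and every $\Delta\in\der(s,a)$, the existence of some $\Delta'\in\der(t,a)$ with $\Kantorovich(d)(\Delta,\Delta')\le d(s,t)$, together with the symmetric statement obtained by swapping $s$ and $t$; here the convention $\inf\emptyset=1$ forces $d(s,t)=1$ whenever $\der(t,a)=\emptyset\neq\der(s,a)$, which is consistent with Definition~\ref{def:sb}. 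Going from Definition~\ref{def:sb} to $\functorany(d)\sqsubseteq d$, one instantiates the transfer condition with every threshold $\epsilon\in[d(s,t),1)$ and lets $\epsilon\downarrow d(s,t)$ to get $\inf_{\Delta'\in\der(t,a)}\Kantorovich(d)(\Delta,\Delta')\le d(s,t)$ (the case $d(s,t)=1$ being trivial), and then takes suprema over $\Delta$ and over $a$. For the converse one uses image-finiteness: $\der(t,a)$ being finite, the infimum above is attained, so $\functorany(d)\sqsubseteq d$ supplies, for each admissible $\epsilon$, an explicit witness $\Delta'\in\der(t,a)$ with $\Kantorovich(d)(\Delta,\Delta')\le d(s,t)\le\epsilon$, which is exactly the transfer condition.

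Second, $\functorany$ sends $1$-bounded pseudometrics to $1$-bounded pseudometrics: the range and the bound are clear; $\functorany(d)(s,s)=0$ since $\Kantorovich(d)(\Delta,\Delta)=0$ forces $\Hausdorff(\Kantorovich(d))(\der(s,a),\der(s,a))=0$; symmetry is inherited from the symmetry of $\Kantorovich(d)$ and of $\Hausdorff$; and the triangle inequality holds action-by-action because $\Kantorovich$ and $\Hausdorff$ lift pseudometrics to pseudometrics, after which one takes the supremum over $a$. Consequently the least fixed point $\mu$ of the monotone map $\functorany$ (which exists on the complete lattice $([0,1]^{S\times S},\sqsubseteq)$ and is obtained by iterating $\functorany$ from $\bisimdbot$, taking suprema at limit stages) is itself a $1$-bounded pseudometric, since $\bisimdbot$ is one and suprema of $\sqsubseteq$-chains of $1$-bounded pseudometrics are again $1$-bounded pseudometrics.

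Putting these together: $\functorany(\mu)=\mu\sqsubseteq\mu$, so by the equivalence $\mu$ is a state-based bisimulation metric, hence $\bisimdstrong\sqsubseteq\mu$; conversely $\bisimdstrong$ is a $1$-bounded pseudometric and a state-based bisimulation metric, so by the equivalence it is a pre-fixed point of $\functorany$, and Knaster--Tarski gives $\mu\sqsubseteq\bisimdstrong$. Thus $\bisimdstrong=\mu$, which incidentally also re-proves that a smallest state-based bisimulation metric exists. I expect the equivalence to be the only real difficulty, specifically the bookkeeping around the threshold $\epsilon$, the boundary value $d(s,t)=1$, and empty derivative sets handled via the conventions $\sup\emptyset=0$, $\inf\emptyset=1$, together with the (standard) fact that $\Kantorovich$ and $\Hausdorff$ preserve the triangle inequality; the remaining lattice-theoretic steps are routine.
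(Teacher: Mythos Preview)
Your proposal is correct. The paper does not actually prove this proposition: it is stated with a bare \qed, preceded only by the one-line remark that $\functorany$ is monotone and that its least fixed point arises as the iteration $d_0=\bisimdbot$, $d_{i+1}=\functorany(d_i)$ (the paper writes $\bigsqcap d_i$, which appears to be a slip for $\bigsqcup d_i$, since the chain from $\bisimdbot$ is increasing).

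Compared with that hint, your route via Knaster--Tarski and the equivalence ``$d$ is a state-based bisimulation metric iff $\functorany(d)\sqsubseteq d$'' is both more detailed and more robust: it does not rely on $\omega$-continuity of $\functorany$ (which the $\omega$-iteration approach would need, and which, while true under image-finiteness, requires its own argument). Your treatment of the boundary cases ($d(s,t)=1$, empty derivative sets via $\inf\emptyset=1$), the use of image-finiteness to turn the infimum over $\der(t,a)$ into an attained minimum, and the check that $\functorany$ preserves $1$-bounded pseudometrics so that the least fixed point is itself a pseudometric, are all in order.
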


% --------------------------------------------------------------------------
%\subsection{Logical characterisation of state-based bisimulation metric} \label{sec:logic_characterizations}
% --------------------------------------------------------------------------

Now we proceed by defining a real-valued modal logic based on the Hennessy-Milner logic \cite{HM85}, called metric HML, to characterize the bisimilarity metric. Our logic is motivated by \cite[Def.~4.1]{DJGP02}, \cite[Def.~4.1]{DGJP04} and \cite[Sec.~4]{DTW10}. In the remainder of this section we confine ourselves to pLTSs with finitely many states.

\begin{definition}\label{def:mHML}
Our metric  HML is two-sorted and has the following syntax:
\[
\begin{array}{rl}
	\varphi ::= & \top \ |\ \lnot\varphi \ |\ \varphi\ominus p \ |\ \varphi_1 \land \varphi_2 \ | \ \dia{a}\psi \\[0.5ex]

	\psi ::= & \psi\ominus p \ | \ \psi_1 \land \psi_2\ | \  [\varphi] 
\end{array}
\]

%\remarkDG{$[\varphi]$ is called $\downarrow \varphi$ in \cite[Tab.~1]{DG10}}
with $a \in A$ and $p\in [0,1]$. 
\end{definition}
Let $\logic$ denote the set of all metric HML formulae, $\varphi$ range over the set of all \emph{state formulae} $\logicstate$, and $\psi$ range over the set of all \emph{distribution formulae} $\logicdist$. The two kinds of formulae are defined simultaneously. The operator $\phi\ominus p$ tests if a state passes $\phi$ with probability at least $p$. If $\phi$ is a state formula then it immediately induces a distribution formula $[\phi]$.
%Moreover, let $\logicstrong = \logic$ and $\logicconvex$ be all formulae of $\logic$ without conjunction of distribution formulae $\psi_1 \land \psi_2$. 
Sometimes we abbreviate $\dia{a}[\varphi]$ as $\dia{a}\varphi$. All other operators are standard and have appeared in the classical HML.

%\remarkDG{Q: \cite[Def.~4.1]{DGJP04} defines also the operator $\sup_{i\in I}\varphi_i$. I conjecture that for finite $I$ this can be expressed by negation and $\land$. Is this operator only required for distributions with non-finite support?}

%\remarkDG{Q: To characterize bisimilarity metric for Markov chains the operator $\varphi \ominus q$ with $q\in\Q$ is only required if transitions may lead to proper sub-distributions (semantics: $\sem{\varphi\ominus q}{s}=\max(\sem{\varphi}{s}-q,0)$). However, is it required for Segala-type systems?}

\begin{definition}
A state formula $\varphi \in \logicstate$ evaluates in $s \in S$ as follows:
\begin{itemize}[\label={}]
	\item $\displaystyle \sem{\top}{s}=1$,\\[-0.5ex]
	\item $\displaystyle \sem{\lnot \varphi}{s}=1-\sem{\varphi}{s}$,\\[-0.5ex]
	\item $\displaystyle \sem{\varphi\ominus p}{s}=\max(\sem{\varphi}{s}-p,~ 0)$,\\[-0.5ex]
	\item $\displaystyle \sem{\varphi_1 \land \varphi_2}{s}=\min(\sem{\varphi_1}{s},\sem{\varphi_2}{s})$,\\[-0.5ex]
	\item $\displaystyle \sem{\dia{a}\psi}{s}=\max_{s \trans[a] \Delta} \sem{\psi}{\Delta}$,\\[-0.5ex]
%	\item[] $\displaystyle\llbracket\sup_{i\in I}\varphi_i\rrbracket(s)=\sup_{i\in I}\llbracket \varphi_i\rrbracket(s)$.
\end{itemize}
and a distribution formula $\psi \in \logicdist$ evaluates in $\Delta \in \dist{S}$ as follows: 
\begin{itemize}[\label={}]
       \item $\displaystyle \sem{\psi \ominus p}{\Delta}=\max(\sem{\psi}{\Delta}-p,0)$,\\[-0.5ex]

	\item $\displaystyle \sem{\psi_1 \land \psi_2}{\Delta}=\min(\sem{\psi_1}{\Delta},\sem{\psi_2}{\Delta})$,\\[-0.5ex]
	\item $\displaystyle \sem{[\varphi]}{\Delta}=\sum_{s\in S}\Delta(s)\sem{\varphi}{s}$. \\[-0.5ex]
%
%	\item $\displaystyle\llbracket\sup_{i\in I}\varphi_i\rrbracket(s)=\sup_{i\in I}\llbracket \varphi_i\rrbracket(s)$.
\end{itemize}
\end{definition}
%\remarkDG{Evaluation $\sem{\dia{a}\psi}{s}=\max_{s \trans[a] \Delta} \sem{\psi}{\Delta}$ requires finite branching, i.e. $\der(s,a)$ has to be finite (otherwise replace max by sup).}
%
We often use constant formulae e.g. $\const{p}$ for any $p\in [0,1]$ with the semantics $\sem{\const{p}}{s} = p$, which is derivable in the above logic by letting $\const{p}=\top \ominus (1-p)$. Moreover, we write $\varphi\rplus p$ for $\lnot ((\lnot \varphi) \ominus p)$ which has the semantics $\displaystyle \sem{\varphi\rplus p}{s}=\min(\sem{\varphi}{s}+p,~1)=1-\max(1-\sem{\varphi}{s}-p,0)$. In the presence of negation and conjunction we can derive disjunction by letting $\varphi_1\vee\varphi_2$ to be $\lnot(\lnot \varphi_1\wedge\lnot\varphi_2)$.
Semantics of $\dia{a}\varphi$ is a translation of \cite[Def.~4.1]{DJGP02} from labelled concurrent Markov chain semantics to pLTSs. Conjunction of distribution formulae $\psi_1 \land \psi_2$ could alternatively be replaced by $\psi_1 \oplus_p \psi_2$ or $\psi_1 \oplus \psi_2$ \cite[Sec.~4]{Hen12} with semantics
\[
\begin{array}{rl}
	\sem{\psi_1 \oplus_p \psi_2}{\Delta} &= \displaystyle \sup_{\Delta = p\Delta_1 + (1-p)\Delta_2} (p\cdot\sem{\psi_1}{\Delta_1} + (1-p)\sem{\psi_2}{\Delta_2} \\
	\sem{\psi_1 \oplus \psi_2}{\Delta} &= \displaystyle \sup_{p \in (0,1)} \sem{\psi_1 \oplus_p \psi_2}{\Delta}
\end{array}
\]
%leaveout2303
\leaveout{
This uses the ``splitting property'' of metric distances between distributions \cite[Lemma~3.5 and Corollary~3.6]{DJGP02}. 
%\remarkDG{$\oplus$ seems to capture $[\varphi]$ because $\Delta \models \lnot (\lnot \varphi \oplus \top)$ implies that for every $s \in \lceil \Delta \rceil$ we have $s \models \varphi$ (\cite[Ex.~4.5]{Hen12})}

\begin{proposition}
Let $r\colon \logicdist \to \logicdist$ be defined by structural induction:
\[
	r(\psi) =
	\begin{cases}
		r(\psi_1 \ominus p) \land r(\psi_2 \ominus p) & \text{if }\psi=(\psi_1 \land \psi_2) \ominus p \\
		[\varphi \ominus p/|S|] & \text{if }\psi=[\varphi] \ominus p \\
		r(\psi_1) \land r(\psi_2)  & \text{if } \psi=\psi_1 \land \psi_2\\ 
		[\varphi] & \text{if }\psi=[\varphi] 
	\end{cases}
\]
%\remarkDG{Assumption: $S$ is finite. Moreover, note that this construction depends on the size of $S$.}
Note that $r$ eliminates the $\ominus$ operator in distribution terms. For all $\Delta$ we have
\[
	\sem{\psi}{\Delta} = \sem{r(\psi)}{\Delta}
\]
\end{proposition}
\begin{proof}
Let $\psi=(\psi_1 \land \psi_2) \ominus p$. Then 
\[\begin{array}{rcl}
\sem{(\psi_1 \land \psi_2) \ominus p}{\Delta} & = &
\max(\sem{\psi_1 \land \psi_2}{\Delta}-p,0) \\
& = & \max(\min(\sem{\psi_1}{\Delta},\sem{\psi_2}{\Delta})-p,0) \\
& = & \min(\max(\sem{\psi_1}{\Delta}-p,0),\max(\sem{\psi_2}{\Delta}-p),0) \\
& = & \sem{(\psi_1 \ominus p) \land (\psi_2 \ominus p)}{\Delta}.
\end{array}\]

Let $\psi=[\varphi] \ominus p$. Then 
\[\begin{array}{rcl}
\sem{[\varphi] \ominus p}{\Delta} & = & \max((\sum_{s\in S}\Delta(s)\sem{\varphi}{s})-p,0)\\
& = & \max(\sum_{s\in S}\Delta(s)(\sem{\varphi}{s})-p/|S|),0) \\
& = & \sem{[\varphi \ominus p]}{\Delta}.
\end{array}\]

The other cases are trivial.
\end{proof}
}%end of leaveout2303
%If also sub-distributions should be considered an additional functional $f\ominus q$ with $\llbracket f\ominus q\rrbracket(s)=\max(\llbracket f\rrbracket(s)-q,0)$ with $q\in\Q$ is required. 
%The following proposition is an adaption of \cite[Thm~4.10]{DJGP02}.

The above metric HML induces two natural logical metrics $\bisimdlogicstrong$ and $\bisimddlogicstrong$  on states and distributions respectively, by letting
\[\begin{array}{rcl}
	\bisimdlogicstrong(s,t) & = & \sup_{\varphi\in\logicstate} | \sem{\varphi}{s} - \sem{\varphi}{t} | \\
	\bisimddlogicstrong(\Delta,\Theta) & = & \sup_{\psi\in\logicdist} | \sem{\psi}{\Delta} - \sem{\psi}{\Theta} |
\end{array}\]

\begin{example}\label{ex:1}
Consider the two probabilistic systems depicted in Figure~\ref{fig:comb.trans}.
We have the formula  $\varphi=\dia{a}\psi$ where
$\psi=[\dia{a}\top] \land [\dia{b}\top])$ and would like to know the difference $s$ and $t$ given by $\varphi$. Let 
\[
\begin{array}{rcl}
	\Delta_1 & = & 0.2\cdot\pdist{s_1} + 0.8\cdot \pdist{s_2}\\
	\Delta_2 & = & 0.8\cdot\pdist{s_5} + 0.2\cdot \pdist{s_6}\\
	\Delta_3 & = & 0.5\cdot\pdist{s_3} + 0.5\cdot \pdist{s_4}
\end{array}
\] 
Note that $\sem{\dia{a}\top}{s_{1}}=1$ and $\sem{\dia{a}\top }{s_{2}}=0$. Then $\sem{[\dia{a}\top]}{\Delta_1}=0.2\cdot\sem{\dia{a}\top}{s_1} + 0.8\cdot\sem{\dia{a}\top}{s_2}=0.2$. Similarly, $\sem{[\dia{b}\top]}{\Delta_1}=0.8$. It follows that $\sem{\psi}{\Delta_1}=\min(\sem{[\dia{a}\top]}{\Delta_1},~ \sem{[\dia{b}\top]}{\Delta_1})=0.2$. With similar arguments, we see that $\sem{\psi}{\Delta_2}=0.2$ and $\sem{\psi}{\Delta_3}=0.5$.
Therefore, we can calculate that 
\[\begin{array}{l}\sem{\varphi}{s}=\max(\sem{\psi}{\Delta_1},\sem{\psi}{\Delta_2})=0.2\\
 \sem{\varphi}{t}=\max(\sem{\psi}{\Delta_1},\sem{\psi}{\Delta_2},\sem{\psi}{\Delta_3})=0.5.
\end{array}\] 
So the difference  between $s$ and $t$ with respect to $\varphi$ is 
$|\sem{\varphi}{s} - \sem{\varphi}{t}|=0.3$. In fact we also have $\bisimdlogicstrong(s,t)=0.3$.

\begin{figure}
\begin{center}
\scalebox{0.8}{
\begin{tikzpicture}[->,>=stealth,auto,node distance=1.7cm,semithick,scale=1,every node/.style={scale=1}]
        \tikzstyle{state}=[minimum size=10pt,circle,draw,thick]
        \tikzstyle{blackdot}=[circle,fill=black, minimum
        size=3pt,inner sep=0pt]
	\node[state] (p) at (0,0) {$s$} ;
	\node[blackdot] (ppi1) at ($ (p) - (1.5,1) $){};
	\node[blackdot] (ppi2) at ($ (p) - (-1.5,1) $) {};
	\node[state] (p2) at ($ (ppi1) - (1,1) $) {$s_1$};
	\node[state] (p3) at ($ (ppi1) - (-1,1) $) {$s_2$};
	\node[state] (p4) at ($ (ppi2) - (1,1) $) {$s_5$};
	\node[state] (p5) at ($ (ppi2) - (-1,1) $) {$s_6$};
	\node[blackdot] (ppi3) at ($ (p2) - (0,1) $) {};
	\node[blackdot] (ppi4) at ($ (p3) - (0,1) $) {};
	\node[blackdot] (ppi5) at ($ (p4) - (0,1) $) {};
	\node[blackdot] (ppi6) at ($ (p5) - (0,1) $) {};

	\path[->] (p) edge node [above] {{\tiny $a$}} (ppi1);
	\path[->] (p) edge node [above] {{\tiny $a$}} (ppi2);
	\path[->] (ppi1) [dotted]  edge node [left] {{\tiny $0.2$}} (p2);	
	\path[->] (ppi1) [dotted]  edge node [right] {{\tiny $0.8$}} (p3);	
	\path[->] (ppi2) [dotted]  edge node [left] {{\tiny $0.8$}} (p4);	
	\path[->] (ppi2) [dotted]  edge node [right] {{\tiny $0.2$}} (p5);	
	\path[->] (p2) edge node [left] {{\tiny $a$}} (ppi3);
	\path[->] (p3) edge node [right] {{\tiny $b$}} (ppi4);
	\path[->] (p4) edge node [left] {{\tiny $a$}} (ppi5);
	\path[->] (p5) edge node [right] {{\tiny $b$}} (ppi6);
	\path[->] (ppi3) [bend left = 85,distance=0.5cm, dotted] edge node [left] {{\tiny $1.0$}} (p2);
	\path[->] (ppi4) [bend left = 85,distance=0.5cm, dotted] edge node [left] {{\tiny $1.0$}} (p3);
	\path[->] (ppi5) [bend right = 85,distance=0.5cm, dotted] edge node [right] {{\tiny $1.0$}} (p4);
	\path[->] (ppi6) [bend right = 85,distance=0.5cm, dotted] edge node [right] {{\tiny $1.0$}} (p5);
\end{tikzpicture}
}
\vskip 5mm

\scalebox{0.8}{
\begin{tikzpicture}[->,>=stealth,auto,node distance=1.7cm,semithick,scale=1,every node/.style={scale=1}]
        \tikzstyle{state}=[minimum size=10pt,circle,draw,thick]
        \tikzstyle{blackdot}=[circle,fill=black, minimum
        size=3pt,inner sep=0pt]
	\node[state] (p) at (0,0) {$t$} ;
	\node[blackdot] (ppi1) at ($ (p) - (3,1) $) {};
	\node[blackdot] (ppi2) at ($ (p) - (0,1) $) {};
	\node[blackdot] (ppi3) at ($ (p) - (-3,1) $) {};
	\node[state] (p2) at ($ (ppi1) - (1,1) $) {$s_1$};
	\node[state] (p3) at ($ (ppi1) - (-1,1) $) {$s_2$};
	\node[state] (p4) at ($ (ppi2) - (1,1) $) {$s_3$};
	\node[state] (p5) at ($ (ppi2) - (-1,1) $) {$s_4$};
	\node[state] (p6) at ($ (ppi3) - (1,1) $) {$s_5$};
	\node[state] (p7) at ($ (ppi3) - (-1,1) $) {$s_6$};
	\node[blackdot] (ppi4) at ($ (p2) - (0,1) $) {};
	\node[blackdot] (ppi5) at ($ (p3) - (0,1) $) {};
	\node[blackdot] (ppi6) at ($ (p4) - (0,1) $) {};
	\node[blackdot] (ppi7) at ($ (p5) - (0,1) $) {};
	\node[blackdot] (ppi8) at ($ (p6) - (0,1) $) {};
	\node[blackdot] (ppi9) at ($ (p7) - (0,1) $) {};

	\path[->] (p) edge node [above] {{\tiny $a$}} (ppi1);
	\path[->] (p) edge node [right] {{\tiny $a$}} (ppi2);
	\path[->] (p) edge node [above] {{\tiny $a$}} (ppi3);
	\path[->] (ppi1) [dotted]  edge node [left] {{\tiny $0.2$}} (p2);	
	\path[->] (ppi1) [dotted]  edge node [right] {{\tiny $0.8$}} (p3);	
	\path[->] (ppi2) [dotted]  edge node [left] {{\tiny $0.5$}} (p4);	
	\path[->] (ppi2) [dotted]  edge node [right] {{\tiny $0.5$}} (p5);	
	\path[->] (ppi3) [dotted]  edge node [left] {{\tiny $0.8$}} (p6);	
	\path[->] (ppi3) [dotted]  edge node [right] {{\tiny $0.2$}} (p7);	
	\path[->] (p2) edge node [left] {{\tiny $a$}} (ppi4);
	\path[->] (p3) edge node [right] {{\tiny $b$}} (ppi5);
	\path[->] (p4) edge node [left] {{\tiny $a$}} (ppi6);
	\path[->] (p5) edge node [right] {{\tiny $b$}} (ppi7);
	\path[->] (p6) edge node [left] {{\tiny $a$}} (ppi8);
	\path[->] (p7) edge node [right] {{\tiny $b$}} (ppi9);
	\path[->] (ppi4) [bend left = 85,distance=0.5cm, dotted] edge node [left] {{\tiny $1.0$}} (p2);
	\path[->] (ppi5) [bend left = 85,distance=0.5cm, dotted] edge node [left] {{\tiny $1.0$}} (p3);
	\path[->] (ppi6) [bend right = 85,distance=0.5cm, dotted,pos=0.65] edge node [right] {{\tiny $1.0$}} (p4);
	\path[->] (ppi7) [bend left = 85,distance=0.5cm, dotted,pos=0.35] edge node [left] {{\tiny $1.0$}} (p5);
	\path[->] (ppi8) [bend right = 85,distance=0.5cm, dotted] edge node [right] {{\tiny $1.0$}} (p6);
	\path[->] (ppi9) [bend right = 85,distance=0.5cm, dotted] edge node [right] {{\tiny $1.0$}} (p7);
\end{tikzpicture}
}
\caption{ $\bisimdlogicstrong(s,t)=0.3$ %Combined transitions.
}\label{fig:comb.trans}
\end{center}
\end{figure}

\end{example}

\begin{example}
At first sight the following two equations seem to be sound.
\[
\sem{[\varphi]\ominus p}{\Delta} = \sem{[\varphi\ominus p]}{\Delta}
 \qquad\mbox{and}\qquad
\sem{\psi}{\sum_ip_i\Delta_i} = \sum_i p_i(\sem{\psi}{\Delta_i})
\]
However, in general they do not hold, as witnessed by the counterexamples below. Let $\varphi=\dia{b}\top$, $\psi=[\varphi]\ominus 0.5$ and the distribution $\Delta_1$ be the same as in Example~\ref{ex:1}. Then we have
\[\begin{array}{rcl}
\sem{[\varphi]\ominus 0.5}{\Delta_1} & = & \max(\sem{[\varphi]}{\Delta_1}-0.5,\; 0) \\ & = &  \max(0.2 \sem{[\dia{b}\top]}{\pdist{s_1}} + 0.8 \sem{[\dia{b}\top]}{\pdist{s_2}} -0.5,\; 0) \\
 & = & \max(0.2 \cdot 0 + 0.8 \cdot 1 -0.5,\; 0) \\
& = & 0.3 \\ \\

\sem{[\varphi\ominus 0.5]}{\Delta_1} & = & 0.2 \sem{\varphi\ominus 0.5}{s_1} + 0.8 \sem{\varphi\ominus 0.5}{s_2}\\
 & = & 0.2 \max(\sem{\varphi}{s_1}-0.5,\; 0) + 0.8 \max(\sem{\varphi}{s_2}-0.5,\; 0) \\
 & = & 0.2 \max(0-0.5,\; 0) + 0.8 \max(1-0.5,\; 0)\\
 & = & 0.4 \\ \\

0.2 \sem{\psi}{\pdist{s_1}} + 0.8 \sem{\psi}{\pdist{s_2}}
& =  & 0.2 \sem{[\varphi]\ominus 0.5}{\pdist{s_1}} + 0.8 \sem{[\varphi]\ominus 0.5}{\pdist{s_2}}\\
 & = & 0.2 \max(\sem{[\varphi]}{\pdist{s_1}}- 0.5,\; 0) + 0.8 \max(\sem{[\varphi]}{\pdist{s_2}}- 0.5,\; 0) \\
& = & 0.2 \max(0-0.5,\; 0) + 0.8 \max(1-0.5,\; 0)\\
& = & 0.4
\end{array}\]
So we see that $\sem{[\varphi]\ominus 0.5}{\Delta_1}\not=\sem{[\varphi\ominus 0.5]}{\Delta_1}$ and $\sem{\psi}{\Delta_1}\not=
0.2 \sem{\psi}{\pdist{s_1}} + 0.8 \sem{\psi}{\pdist{s_2}}$
\end{example}

In what follows we will show that the logic $\logic$ precisely captures the bisimilarity metric $\bisimdstrong$: the metric $\bisimdlogicstrong$ defined by state formulae coincides with $\bisimdstrong$ and the metric $\bisimddlogicstrong$ defined by distribution formulae coincides with $\Kantorovich(\bisimdstrong)$, the lifted form of $\bisimdstrong$. The two properties are entangled because state formulae and distribution formulae are not independent.

\begin{lemma}\label{lem:dldst}
\begin{enumerate}
\item $\bisimdlogicstrong \sqsubseteq \bisimdstrong$
\item $\bisimddlogicstrong \sqsubseteq \Kantorovich(\bisimdstrong)$
\end{enumerate}
\end{lemma}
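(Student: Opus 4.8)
The plan is to prove both inequalities at once, by establishing the pointwise strengthenings
\[
|\sem{\varphi}{s}-\sem{\varphi}{t}| \le \bisimdstrong(s,t) \qquad\text{and}\qquad |\sem{\psi}{\Delta}-\sem{\psi}{\Theta}| \le \Kantorovich(\bisimdstrong)(\Delta,\Theta)
\]
for all state formulae $\varphi\in\logicstate$, distribution formulae $\psi\in\logicdist$, states $s,t$ and distributions $\Delta,\Theta$; taking the supremum over formulae then gives parts~1 and~2 respectively. I would carry this out by a single mutual induction on the structure of formulae of both sorts: the size of a formula strictly decreases when passing from $\dia{a}\psi$ to $\psi$ and from $[\varphi]$ to $\varphi$, so the induction is well founded and, in particular, when I treat $\dia{a}\psi$ I may use the hypothesis for its distribution subformula $\psi$, and when I treat $[\varphi]$ I may use the hypothesis for its state subformula $\varphi$.

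All the ``Boolean-like'' cases are immediate because every connective involved is non-expansive. For $\top$ both values are $1$; for $\lnot\varphi$ one has $|(1-\sem{\varphi}{s})-(1-\sem{\varphi}{t})|=|\sem{\varphi}{s}-\sem{\varphi}{t}|$; the map $x\mapsto\max(x-p,0)$ is $1$-Lipschitz, which handles $\varphi\ominus p$ and $\psi\ominus p$; and $|\min(a_1,a_2)-\min(b_1,b_2)|\le\max(|a_1-b_1|,|a_2-b_2|)$ handles $\varphi_1\land\varphi_2$ and $\psi_1\land\psi_2$. In each of these cases one simply applies the induction hypothesis to the immediate subformulae.

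The two substantial cases are $\dia{a}\psi$ and $[\varphi]$. For $\dia{a}\psi$: if $\bisimdstrong(s,t)=1$ the claim is trivial since all semantic values lie in $[0,1]$; otherwise set $\epsilon=\bisimdstrong(s,t)\in[0,1)$ and assume w.l.o.g.\ $\sem{\dia{a}\psi}{s}\ge\sem{\dia{a}\psi}{t}$. As the state space is finite the maximum defining $\sem{\dia{a}\psi}{s}$ is attained, so pick $s\trans[a]\Delta$ with $\sem{\psi}{\Delta}=\sem{\dia{a}\psi}{s}$. Since $\bisimdstrong$ is a state-based bisimulation metric (Definition~\ref{def:sb}), there is a transition $t\trans[a]\Theta$ with $\Kantorovich(\bisimdstrong)(\Delta,\Theta)\le\epsilon$; the induction hypothesis for $\psi$ then yields $\sem{\psi}{\Theta}\ge\sem{\psi}{\Delta}-\epsilon$, hence $\sem{\dia{a}\psi}{t}\ge\sem{\psi}{\Theta}\ge\sem{\dia{a}\psi}{s}-\epsilon$, which is the required bound. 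For $[\varphi]$: I would invoke the dual linear program for $\Kantorovich$ recalled in Section~\ref{sec:preliminaries}. By the induction hypothesis for $\varphi$, the vector $(x_s)_{s\in S}$ defined by $x_s=\sem{\varphi}{s}$ satisfies $0\le x_s\le 1$ and $x_s-x_t\le\bisimdstrong(s,t)$, so it is feasible, and therefore $\sem{[\varphi]}{\Delta}-\sem{[\varphi]}{\Theta}=\sum_{s\in S}(\Delta(s)-\Theta(s))\,x_s\le\Kantorovich(\bisimdstrong)(\Delta,\Theta)$; swapping $\Delta$ and $\Theta$ gives the statement with the absolute value.

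The main obstacle is the $\dia{a}\psi$ case: it is the only point at which the defining transfer property of the bisimulation metric is used, and it forces the case split on whether $\bisimdstrong(s,t)<1$ (Definition~\ref{def:sb} is vacuous at distance $1$), as well as the appeal to finiteness of the state space so that the maxima defining $\sem{\dia{a}\psi}{\cdot}$ are realised by actual transitions. Everything else reduces to non-expansiveness of the connectives and, for $[\varphi]$, to Kantorovich--Rubinstein duality.
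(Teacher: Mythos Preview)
Your proposal is correct and follows essentially the same approach as the paper: a simultaneous structural induction on state and distribution formulae, with the non-expansiveness of the connectives handling the propositional cases, the bisimulation transfer property of $\bisimdstrong$ handling $\dia{a}\psi$, and Kantorovich--Rubinstein duality handling $[\varphi]$. Your treatment of the $\dia{a}\psi$ case via the case split on $\bisimdstrong(s,t)=1$ is in fact slightly more explicit than the paper's, which simply sets aside the ``trivial'' case where one of the two states cannot perform $a$.
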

\begin{proof}
We show the two statements simultaneously by structural induction on formulae. For any two states $s,t\in S$ and distributions $\Delta_1,\Delta_2\in \dist{S}$, we prove that
\begin{enumerate}
\item[(1)] $| \sem{\varphi}{s} - \sem{\varphi}{t} | \leq \bisimdstrong(s,t)$ for all $\varphi\in \logicstate$;
\item[(2)] $| \sem{\psi}{\Delta_1} - \sem{\psi}{\Delta_2} | \leq \Kantorovich(\bisimdstrong)(\Delta_1,\Delta_2)$ for all $\psi\in \logicdist$.
\end{enumerate}
We first analyze the structure of $\varphi$ in (1).
\begin{itemize}
\item $\varphi \equiv \top$. Then it is trivial to see that $| \sem{\varphi}{s} - \sem{\varphi}{t} | = |1-1| = 0 \leq \bisimdstrong(s,t)$.

\item $\varphi\equiv \lnot\varphi'$. Then $| \sem{\varphi}{s} - \sem{\varphi}{t} | = | \sem{\varphi'}{t} - \sem{\varphi'}{s} | \leq \bisimdstrong(s,t)$ where the inequality holds by induction.

\item $\varphi \equiv \varphi'\ominus p$. There are four subcases and we consider one of them. Suppose $\sem{\varphi'}{s}>p$ and $\sem{\varphi'}{t}\leq p$, then 
$|\sem{\varphi}{s}-\sem{\varphi}{t}| = |\sem{\varphi'}{s}-p| \leq |\sem{\varphi'}{s} - \sem{\varphi'}{t}| \leq \bisimdstrong(s,t)$ by induction.

\item $\varphi\equiv \varphi_1 \land \varphi_2$. Without loss of generality we assume that $\sem{\varphi}{s} \geq \sem{\varphi}{t}$. There are two possibilities:
\begin{itemize}
\item If $\sem{\varphi_1}{t} \leq \sem{\varphi_2}{t}$, then $\sem{\varphi}{s} - \sem{\varphi}{t} \leq \sem{\varphi_1}{s} - \sem{\varphi_1}{t} \leq \bisimdstrong(s,t)$, where the last inequality holds by induction.

\item Symmetrically, if  $\sem{\varphi_2}{t} \leq \sem{\varphi_1}{t}$, then $\sem{\varphi}{s} - \sem{\varphi}{t} \leq \sem{\varphi_2}{s} - \sem{\varphi_2}{t} \leq \bisimdstrong(s,t)$.
\end{itemize}

\item $\varphi\equiv \dia{a}\psi$. We consider the non-trivial case that both $s$ and $t$ can perform action $a$. (If either of the two states cannot perform action $a$, the expected result is straightforward.) Let $\Delta_1$ be the distribution such that $s \trans[a] \Delta_1$ and $\sem{\dia{a}\psi}{s} = \sem{\psi}{\Delta_1}$. 
%\remarkDG{This assumes that the witness distribution $\Delta_1$ exists (which is only ensured for finite branching), i.e. $\der(s,a)$ finite.}
Since $\bisimdstrong$ is a state-based bisimulation metric, by definition there exists some $\Delta_2$ such that $t\trans[a]\Delta_2$ and $\Kantorovich(\bisimdstrong)(\Delta_1,\Delta_2) \leq \bisimdstrong(s,t)$. Without loss of generality we assume that $\sem{\varphi}{s} \geq \sem{\varphi}{t}$. It follows that
\[\begin{array}{rl}
& \sem{\varphi}{s} - \sem{\varphi}{t} \\
= & \sem{\psi}{\Delta_1} - \max_{s \trans[a] {\Delta'}} \sem{\psi}{\Delta'} \\
\leq & \sem{\psi}{\Delta_1} - \sem{\psi}{\Delta_2} \\
\leq & \Kantorovich(\bisimdstrong)(\Delta_1,\Delta_2) \qquad\mbox{by induction on $\psi$}\\
\leq & \bisimdstrong(s,t) 
\end{array}\]
\end{itemize}

Then we analyze the structure of $\psi$ in (2).
\begin{itemize}
\item $\psi = \psi_1\land \psi_2$. Without loss of generality we assume that $\sem{\psi}{\Delta_1} \geq \sem{\psi}{\Delta_2}$. There are two possibilities:
\begin{itemize}
\item If $\sem{\psi_1}{\Delta_2} \leq \sem{\psi_2}{\Delta_2}$, then $\sem{\psi}{\Delta_1} - \sem{\psi}{\Delta_2} \leq \sem{\psi_1}{\Delta_1} - \sem{\psi_1}{\Delta_2} \leq \Kantorovich(\bisimdstrong)(\Delta_1,\Delta_2)$, where the last inequality holds by induction.

\item Symmetrically, if  $\sem{\psi_2}{\Delta_2} \leq \sem{\psi_1}{\Delta_2}$, then $\sem{\psi}{\Delta_1} - \sem{\psi}{\Delta_2} \leq \sem{\psi_2}{\Delta_1} - \sem{\psi_2}{\Delta_2} \leq \Kantorovich(\bisimdstrong)(\Delta_1,\Delta_2)$.
\end{itemize}

\item $\psi = \psi'\ominus p$ for some $p$ in $[0,1]$. There are four subcases and we consider one of them. Suppose $\sem{\psi'}{\Delta_1}>p$ and $\sem{\psi'}{\Delta_2}\leq p$, then 
$|\sem{\psi}{\Delta_1}-\sem{\psi}{\Delta_2}| = |\sem{\psi'}{\Delta_1}-p| \leq |\sem{\psi'}{\Delta_1} - \sem{\psi'}{\Delta_2}| \leq \Kantorovich(\bisimdstrong)(\Delta_1,\Delta_2)$ by induction.

\item $\psi = [\varphi]$ for some $\varphi\in \logicstate$. Without loss of generality we assume that $\sem{\varphi}{s} \geq \sem{\varphi}{t}$. We infer that
\[\begin{array}{ll}
& \sem{\psi}{\Delta_1} - \sem{\psi}{\Delta_2}\\
= & \sem{[\varphi]}{\Delta_1} - \sem{[\varphi]}{\Delta_2}\\
= & \sum_{u\in S}(\Delta_1(u)-\Delta_2(u)) \sem{\varphi}{u} \\
\leq & \max\{\sum_{u\in S}(\Delta_1(u)-\Delta_2(u)) x_u \mid x_u, x_{u'}\in [0,1] \land x_u-x_{u'}\leq \bisimdstrong(u,u')\} \\
= & \Kantorovich(\bisimdstrong)(\Delta_1,\Delta_2) 
\end{array}\]
where the last equality holds because of the Kantorovich-Rubinstein duality theorem \cite{KR58,BW01a} and the last inequality holds because for any states $u,u'\in S$ we have $\sem{\varphi}{u}, \sem{\varphi}{u'}\in [0,1]$ and $|\sem{\varphi}{u} - \sem{\varphi}{u'}| \leq \bisimdstrong(u,u')$ by induction on $\varphi$.
%\remarkDG{TODO: Add to 3rd last line over which values the max ranges (maybe give reference to \cite{BW01a} who introduced this formulation to characterize the Kantorovich-metric). Proposal: range over function space $f:S \to [0,1]$ with $f(s)=x_s$}
%\remarkDG{TODO: Explanation why the second last inequality holds needs to be refined (LP formulation of Kantorovich metric)}
\end{itemize}
\end{proof}

\begin{lemma}\label{lem:dist.logic} 
$\Kantorovich(\bisimdlogicstrong) \sqsubseteq \bisimddlogicstrong$
\end{lemma}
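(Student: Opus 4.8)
The plan is to invoke Kantorovich--Rubinstein duality and reduce the inequality to a single existential statement about formulae. Fix distributions $\Delta,\Theta$ and $\epsilon>0$. Since $S$ is finite, the dual linear program for $\Kantorovich(\bisimdlogicstrong)(\Delta,\Theta)$ attains its optimum at a weight vector $(x_s)_{s\in S}$ with $0\le x_s\le 1$ and $x_s-x_t\le\bisimdlogicstrong(s,t)$ for all $s,t\in S$, and with $\sum_{s\in S}(\Delta(s)-\Theta(s))x_s=\Kantorovich(\bisimdlogicstrong)(\Delta,\Theta)$. The idea is to realise the $1$-Lipschitz function $s\mapsto x_s$, up to $\epsilon$, by the semantics of a state formula $\varphi$, and then use the distribution formula $\psi=[\varphi]$, so that $\sem{[\varphi]}{\Delta}-\sem{[\varphi]}{\Theta}=\sum_{s\in S}(\Delta(s)-\Theta(s))\sem{\varphi}{s}$ is within $\epsilon$ of $\Kantorovich(\bisimdlogicstrong)(\Delta,\Theta)$.

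The key intermediate step is: for each state $t_0$ and each $\epsilon>0$ there is a state formula $\Phi_{t_0}$ with $\bisimdlogicstrong(s,t_0)-\epsilon\le\sem{\Phi_{t_0}}{s}\le\bisimdlogicstrong(s,t_0)$ for all $s\in S$. To build it, for each $s$ pick (by definition of the supremum defining $\bisimdlogicstrong$) a state formula $\varphi_s$ with $|\sem{\varphi_s}{s}-\sem{\varphi_s}{t_0}|\ge\bisimdlogicstrong(s,t_0)-\epsilon$; replacing $\varphi_s$ by $\lnot\varphi_s$ if necessary we may assume $\sem{\varphi_s}{s}\ge\sem{\varphi_s}{t_0}$, and then $\chi_s:=\varphi_s\ominus\sem{\varphi_s}{t_0}$ satisfies $\sem{\chi_s}{t_0}=0$ and $\sem{\chi_s}{s}\ge\bisimdlogicstrong(s,t_0)-\epsilon$; moreover, since $\sem{\chi_s}{t_0}=0$, the definition of $\bisimdlogicstrong$ gives $\sem{\chi_s}{s'}=|\sem{\chi_s}{s'}-\sem{\chi_s}{t_0}|\le\bisimdlogicstrong(s',t_0)$ for every $s'$. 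Taking the finite disjunction $\Phi_{t_0}:=\bigvee_{s\in S}\chi_s$ (legitimate because $S$ is finite) yields $\sem{\Phi_{t_0}}{s'}=\max_s\sem{\chi_s}{s'}$, which lies between $\bisimdlogicstrong(s',t_0)-\epsilon$ and $\bisimdlogicstrong(s',t_0)$ as required. Note this step crucially uses that $\ominus$ permits subtracting an arbitrary real threshold.

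Now set $\varphi:=\bigwedge_{t\in S}(\Phi_t\rplus x_t)$, a state formula since $S$ is finite and each $x_t\in[0,1]$; its semantics is $\sem{\varphi}{s}=\min_{t\in S}\min(\sem{\Phi_t}{s}+x_t,\,1)$. The term $t=s$ gives $\sem{\varphi}{s}\le\min(\sem{\Phi_s}{s}+x_s,1)=x_s$, using $\sem{\Phi_s}{s}\le\bisimdlogicstrong(s,s)=0$. For the lower bound, the dual constraints together with $x_s\le1$ give the identity $x_s=\min_{t\in S}\min(x_t+\bisimdlogicstrong(s,t),1)$, and then from $\sem{\Phi_t}{s}\ge\bisimdlogicstrong(s,t)-\epsilon$ and $\min(a-\epsilon,1)\ge\min(a,1)-\epsilon$ we obtain $\sem{\varphi}{s}\ge x_s-\epsilon$. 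Hence $\sem{\varphi}{s}-x_s\in[-\epsilon,0]$ for all $s$, so
\[
\sem{[\varphi]}{\Delta}-\sem{[\varphi]}{\Theta}
=\sum_{s\in S}(\Delta(s)-\Theta(s))\sem{\varphi}{s}
\ \ge\ \sum_{s\in S}(\Delta(s)-\Theta(s))x_s-\epsilon
\ =\ \Kantorovich(\bisimdlogicstrong)(\Delta,\Theta)-\epsilon ,
\]
where the middle inequality uses $\sem{\varphi}{s}-x_s\in[-\epsilon,0]$ and $\sum_{s:\Delta(s)\ge\Theta(s)}(\Delta(s)-\Theta(s))\le 1$. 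Therefore $\bisimddlogicstrong(\Delta,\Theta)\ge\sem{[\varphi]}{\Delta}-\sem{[\varphi]}{\Theta}\ge\Kantorovich(\bisimdlogicstrong)(\Delta,\Theta)-\epsilon$, and letting $\epsilon\to 0$ gives $\Kantorovich(\bisimdlogicstrong)\sqsubseteq\bisimddlogicstrong$. The main obstacle is the second paragraph: proving that the state formulae are expressive enough to approximate, uniformly within $\epsilon$, the "distance-to-$t_0$" function $\bisimdlogicstrong(\cdot,t_0)$ and hence (via $\rplus$ and finite $\land$) every $1$-Lipschitz $[0,1]$-valued function on the finite state space; the duality step and the final arithmetic with $[\cdot]$ are then routine.
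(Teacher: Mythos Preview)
Your proof is correct and uses the same high-level strategy as the paper---solve the dual LP for $\Kantorovich(\bisimdlogicstrong)(\Delta,\Theta)$, build a state formula $\varphi$ whose semantics approximates the optimal dual solution within $\epsilon$, and pass to $[\varphi]$---but your construction of $\varphi$ is organised differently. The paper works pairwise: for each ordered pair $(s,t)$ it picks a witness formula for $k_s-k_t\le\bisimdlogicstrong(s,t)$, shifts it with $\ominus$ or $\rplus$ so that it hits $k_s$ exactly at $s$ and lies below $k_t+\epsilon$ at $t$, then takes an inner $\bigwedge_t$ and an outer $\bigvee_s$; this requires a case split on the sign of $k_s-k_t$ and a slightly delicate choice of $\epsilon$ (smaller than $\min\{1-k_t:k_t<1\}$) to keep the $\rplus$ case from saturating at $1$. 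You instead first build, for each anchor $t_0$, a single formula $\Phi_{t_0}$ uniformly approximating the distance function $\bisimdlogicstrong(\cdot,t_0)$ from below, and then exploit the infimum-convolution identity $x_s=\min_t\min(x_t+\bisimdlogicstrong(s,t),1)$ to assemble $\varphi=\bigwedge_t(\Phi_t\rplus x_t)$. Your route is cleaner: it avoids the sign case split and the auxiliary bound on $\epsilon$, and it makes explicit the general principle that any $1$-Lipschitz $[0,1]$-valued function on a finite pseudometric space is an infimum of shifted distance functions. The paper's route is slightly more direct in that it never isolates the ``distance formula'' $\Phi_{t_0}$ as a separate object. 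Both constructions use exactly the same logical ingredients ($\ominus$, $\rplus$, finite $\wedge$, finite $\vee$, and $[\cdot]$) and the same finiteness of $S$.
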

\begin{proof}
Let  $\Delta_1,\Delta_2$ be any two distributions in $\logicdist$.
We show that 
\[\Kantorovich(\bisimdlogicstrong)(\Delta_1,\Delta_2)~\leq~\sup_{\psi\in\logicdist}|\sem{\psi}{\Delta_1}-\sem{\psi}{\Delta_2}|,\] 
using an idea inspired by \cite{DJGP02}. Let $L(\Delta_1,\Delta_2)$ be the optimal value of the following linear program
\[\begin{array}{rl}
\max \sum_{s\in S}(\Delta_1(s)-\Delta_2(s)) x_s, \text{ subject to } & 0\leq x_s\leq 1 \\ 
&\forall s,t\in S\colon\ x_s-x_{t}\leq \bisimdlogicstrong(s,t)
\end{array}\]

Suppose $\{k_s\}_{s\in S}$ be a set of real numbers that maximizes the above linear program to reach $L(\Delta_1,\Delta_2)$. Let $e=\min\{1-k_t\mid k_t < 1 \mbox{ and } t\in S\}$ and
$\epsilon > 0$ be any positive real number smaller than $e$. Hence, if $t\in S$ and $k_t< 1$ then
\begin{equation}\label{eq:ktlessone}
	k_t+\epsilon < 1.
\end{equation}
We construct some formula $\psi$ such that 
\begin{equation}\label{eq:l}
L(\Delta_1,\Delta_2)-\epsilon ~<~ \sem{\psi}{\Delta_1}-\sem{\psi}{\Delta_2}.
\end{equation} 

 For any $s,t\in S$, we distinguish two cases:
\begin{enumerate}
\item If $k_s > k_t$, then $0< k_s - k_t \leq \bisimdlogicstrong(s,t)$. It is easy to see that there exists some formula $\varphi_{st}$ such that 
\begin{equation}\label{eq:ks}
k_s - k_t < \sem{\varphi_{st}}{s}-\sem{\varphi_{st}}{t} +\epsilon.
\end{equation} 
or equivalently $\sem{\varphi_{st}}{t} - \sem{\varphi_{st}}{s} + k_s < k_t +\epsilon$.
We define a new formula
\[\varphi'_{st}\; = \; \left\{\begin{array}{ll}
\varphi_{st} \ominus (\sem{\varphi_{st}}{s} - k_s) \ \ \ & \mbox{if $\sem{\varphi_{st}}{s}> k_s$}\\
\varphi_{st} \rplus (k_s - \sem{\varphi_{st}}{s}) & \mbox{otherwise.}
\end{array}\right.
\]
Let us compare $\varphi'_{st}$ with $k_t$.
\begin{enumerate}
\item If $\sem{\varphi_{st}}{s}> k_s$, then
  \[\begin{array}{rcl}
   \sem{\varphi'_{st}}{t} & = & \max(\sem{\varphi_{st}}{t} - \sem{\varphi_{st}}{s} + k_s, 0) \\
& < & \max(k_t+\epsilon, 0) \qquad\mbox{by (\ref{eq:ks})}\\
& = & k_t+\epsilon
  \end{array}\]
\item Otherwise, 
   $\sem{\varphi'_{st}}{t}  =  \min(\sem{\varphi_{st}}{t} + k_s - \sem{\varphi_{st}}{s}, 1)$. By  (\ref{eq:ks}) we infer that $\sem{\varphi_{st}}{t} + k_s - \sem{\varphi_{st}}{s} < k_t + \epsilon$. Since $k_t < k_s \leq 1$ 
  %and $\epsilon$ is assumed to be a number smaller than $e$, 
we infer by (\ref{eq:ktlessone}) that $k_t+\epsilon < 1$ and thus $\sem{\varphi'_{st}}{t} < k_t+\epsilon$. 
\end{enumerate}
In both (a) and (b) we have $\sem{\varphi'_{st}}{t} < k_t+\epsilon$, and it is also easy to see that $\sem{\varphi'_{st}}{s}=k_s$.

\item If $k_s\leq k_t$, then we simply set $\varphi'_{st}$ to be the formula $\const{k_s}$. As in the last case, we have $\sem{\varphi'_{st}}{s}=k_s$ and $\sem{\varphi'_{st}}{t}=k_s \leq k_t < k_t+\epsilon$.
\end{enumerate}

In summary, the above reasoning says that for any $s,t\in S$ we can construct a formula $\varphi'_{st}$ such that
$\sem{\varphi'_{st}}{s}=k_s$ and $\sem{\varphi'_{st}}{t} < k_t+\epsilon$. 
Now let us define $\varphi'_s=\bigwedge_{t\in S}\varphi'_{st}$. 
It is easy to see that $\sem{\varphi'_s}{s}=k_s$ and $\sem{\varphi'_s}{t}< k_t+\epsilon$ for all $t\in S$. The latter implies $\max\{\sem{\varphi'_s}{t} \mid s,t\in S\} < k_t+\epsilon$.
Then define $\varphi=\bigvee_{s\in S}\varphi'_s$. For all $t\in S$, we have 
\[k_t ~=~ \sem{\varphi'_t}{t} ~\leq~ \sem{\varphi}{t} ~=~ \max\{\sem{\varphi'_s}{t} \mid s,t\in S\} ~<~ k_t+\epsilon.\] 
Finally, we define $\psi=[\varphi]$.
It follows that 
\[\begin{array}{rcl}
\sem{\psi}{\Delta_1} -\sem{\psi}{\Delta_2} & = & \sem{[\varphi]}{\Delta_1} - \sem{[\varphi]}{\Delta_2} \\
& = & \sum_{t\in S}\Delta_1(t)\cdot \sem{\varphi}{t} -\sum_{t\in S}\Delta_2(t))\cdot\sem{\varphi}{t} \\
& \geq & \sum_{t\in S}\Delta_1(t)\cdot k_t - \sum_{t\in S}\Delta_2(t))\cdot\sem{\varphi}{t} \\
& > & \sum_{t\in S}\Delta_1(t)\cdot k_t - \sum_{t\in S}\Delta_2(t))\cdot (k_t +\epsilon)\\
& = & \sum_{t\in S}(\Delta_1(t) - \Delta_2(t)) \cdot k_t - \sum_{t\in S}\Delta_2(t)\cdot\epsilon \\
%& = & L(\Delta_1,\Delta_2) - \sum_{t\in S}\Delta_2(t)\cdot\epsilon \\
& = & L(\Delta_1,\Delta_2) - \epsilon
\end{array}\]
as required.
\end{proof}

The above property will be used to prove the following lemma.
\begin{lemma}\label{lem:stl}
$\bisimdstrong \sqsubseteq \bisimdlogicstrong$
\end{lemma}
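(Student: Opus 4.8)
The plan is to prove that the logical pseudometric $\bisimdlogicstrong$ is \emph{itself} a state-based bisimulation metric; since $\bisimdstrong$ is by definition the smallest state-based bisimulation metric with respect to $\sqsubseteq$, the inequality $\bisimdstrong\sqsubseteq\bisimdlogicstrong$ then follows at once. That $\bisimdlogicstrong$ is a $1$-bounded pseudometric needs no argument: every formula evaluates in $[0,1]$, so $\bisimdlogicstrong(s,t)=\sup_{\varphi\in\logicstate}|\sem{\varphi}{s}-\sem{\varphi}{t}|$ automatically enjoys symmetry, the triangle inequality, $\bisimdlogicstrong(s,s)=0$ and $\bisimdlogicstrong\sqsubseteq\bisimdtop$. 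So the whole content lies in checking the transfer clause of Definition~\ref{def:sb}.

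To that end I would fix $s,t\in S$ and $\epsilon\in[0,1)$ with $\bisimdlogicstrong(s,t)\le\epsilon$, assume $s\trans[a]\Delta$, and try to produce $t\trans[a]\Delta'$ with $\Kantorovich(\bisimdlogicstrong)(\Delta,\Delta')\le\epsilon$. First I would dispose of the degenerate case: $\der(t,a)$ cannot be empty, for otherwise the state formula $\dia{a}\top\ (=\dia{a}[\top])$ gives $\sem{\dia{a}\top}{s}=1$ and $\sem{\dia{a}\top}{t}=0$, so $\bisimdlogicstrong(s,t)=1>\epsilon$. So write $\der(t,a)=\{\Delta'_1,\dots,\Delta'_n\}$ with $n\ge 1$ (a finite set, since the pLTS has finitely many states) and argue by contradiction: suppose $\Kantorovich(\bisimdlogicstrong)(\Delta,\Delta'_i)>\epsilon$ for every $i$. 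The key input is Lemma~\ref{lem:dist.logic} — or rather the construction in its proof, which produces separating formulae of the shape $[\varphi]$ and lets one orient the separation as one pleases (by choosing which distribution plays the role of $\Delta_1$): applied to the pair $(\Delta,\Delta'_i)$ with any precision below $\Kantorovich(\bisimdlogicstrong)(\Delta,\Delta'_i)-\epsilon$ it yields, for each $i$, a state formula $\varphi_i$ with $\sem{[\varphi_i]}{\Delta}-\sem{[\varphi_i]}{\Delta'_i}>\epsilon$, the difference being positive by design.

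The remaining and main obstacle is to amalgamate the family $\{[\varphi_i]\}_i$ into a single state formula that separates $s$ from $t$ by more than $\epsilon$. I would do this in two moves. \emph{Normalise:} put $c_i=\sem{[\varphi_i]}{\Delta}$; since $\sem{[\varphi_i]}{\Delta'_i}\ge 0$ we have $c_i>\epsilon$, hence $v:=\min_i c_i>\epsilon$, and we replace $[\varphi_i]$ by the distribution formula $\psi_i=[\varphi_i]\ominus(c_i-v)$, which has $\sem{\psi_i}{\Delta}=v$ while still $\sem{\psi_i}{\Delta'_i}<v-\epsilon$ (clipping at $0$ cannot destroy this precisely because $v>\epsilon$). \emph{Combine:} $\psi:=\bigwedge_{i=1}^n\psi_i$ is a distribution formula with $\sem{\psi}{\Delta}=v$ but $\sem{\psi}{\Delta'_j}<v-\epsilon$ for \emph{every} $j$, so the state formula $\dia{a}\psi$ satisfies $\sem{\dia{a}\psi}{s}\ge\sem{\psi}{\Delta}=v$ (as $s\trans[a]\Delta$) and $\sem{\dia{a}\psi}{t}=\max_{1\le j\le n}\sem{\psi}{\Delta'_j}<v-\epsilon$, whence $|\sem{\dia{a}\psi}{s}-\sem{\dia{a}\psi}{t}|>\epsilon$, contradicting $\bisimdlogicstrong(s,t)\le\epsilon$. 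Hence some $\Delta'\in\der(t,a)$ has $\Kantorovich(\bisimdlogicstrong)(\Delta,\Delta')\le\epsilon$, which is exactly the transfer clause, so $\bisimdlogicstrong$ is a state-based bisimulation metric and the lemma follows.

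The one genuine subtlety is the \emph{orientation} of the separation in the amalgamation step. Evaluated at $t$ the modality $\dia{a}$ takes a maximum over \emph{all} of $\der(t,a)$, so to keep $\sem{\dia{a}\psi}{t}$ small one must keep \emph{every} $\sem{\psi}{\Delta'_j}$ small, whereas at $s$ a single high-value successor ($\Delta$) suffices. This is why the $\varphi_i$ must be oriented so that $\Delta$ scores strictly higher than $\Delta'_i$ — hence why the proof of Lemma~\ref{lem:dist.logic} is invoked with $\Delta$ in the role of $\Delta_1$ — and why the distribution formulae are combined by $\bigwedge$ rather than $\bigvee$. Everything else (the $1$-boundedness, the clipping estimate $\sem{\psi_i}{\Delta'_i}<v-\epsilon$, and the bookkeeping that turns the existence of $\Delta'$ into Definition~\ref{def:sb}) is routine.
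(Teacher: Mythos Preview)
Your proposal is correct and follows essentially the same route as the paper: show that $\bisimdlogicstrong$ is a state-based bisimulation metric by contradiction, using Lemma~\ref{lem:dist.logic} to obtain separating distribution formulae, then normalise with $\ominus$, conjoin with $\bigwedge$, and prefix with $\dia{a}$. The only cosmetic differences are that the paper normalises each $\psi^i_2$ by subtracting its value at $\Delta^i_2$ (so the conjunction vanishes at every $\Delta^i_2$), whereas you subtract so that all $\psi_i$ agree at $\Delta$; and the paper orients the separation by applying negation, whereas you invoke the construction inside the proof of Lemma~\ref{lem:dist.logic} directly---your choice is arguably tidier, since $\lnot$ is not literally a connective of $\logicdist$.
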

\begin{proof}
We show that $\bisimdlogicstrong$ is a state-based bisimulation metric. Let $s,t$ be any two states in $S$ and $\epsilon$ be any real number in the interval $[0,1)$ with $\bisimdlogicstrong(s,t)\leq\epsilon$. Assume that $s\trans[a]\Delta_1$ is an arbitrarily chosen transition from $s$. Then state $t$ must be able to perform action $a$ too. Otherwise it is easy to see that $\bisimdlogicstrong(s,t)=1>\epsilon$, which contradicts our assumption above. We need to show that there exists some transition $t\trans[a]\Delta_2$ with $\Kantorovich(\bisimdlogicstrong)(\Delta_1,\Delta_2)\leq\epsilon$. 
Suppose for a contradiction that no $a$-transition from $t$ satisfies this condition.
In other words, for each $\Delta^i_2$ with $t\trans[a]\Delta^i_2$ we have $\Kantorovich(\bisimdlogicstrong)(\Delta_1,\Delta^i_2) > \epsilon$. By Lemma~\ref{lem:dist.logic}, this means $\bisimddlogicstrong(\Delta_1,\Delta^i_2) > \epsilon$. Then
there must exist some formula $\psi^i_2 \in \logicdist$ such that 
$|\sem{\psi^i_2}{\Delta_1}-\sem{\psi^i_2}{\Delta^i_2}| > \epsilon$.
Furthermore, we can strengthen this condition to the following one
\begin{equation}\label{eq:psii}
	\sem{\psi^i_2}{\Delta_1}-\sem{\psi^i_2}{\Delta^i_2} > \epsilon
\end{equation}
because we can take the formula $\lnot\psi^i_2$ in place of $\psi^i_2$ in the case that $\sem{\psi^i_2}{\Delta_1} < \sem{\psi^i_2}{\Delta_2}$. 
Let $\varphi=\dia{a}\bigwedge_i(\psi^i_2 \ominus \sem{\psi^i_2}{\Delta^i_2})$. We infer that
\[\begin{array}{rcl}
\sem{\varphi}{s} & = & \max_{s\trans[a]\Delta}\sem{\bigwedge_i\psi^i_2 \ominus \sem{\psi^i_2}{\Delta^i_2}}{\Delta}\\
& \geq & \sem{\bigwedge_i(\psi^i_2 \ominus \sem{\psi^i_2}{\Delta^i_2})}{\Delta_1} \\
& = & \min_i \sem{\psi^i_2 \ominus \sem{\psi^i_2}{\Delta^i_2}}{\Delta_1} \\
& = & \sem{\psi^k_2 \ominus \sem{\psi^k_2}{\Delta^k_2}}{\Delta_1} \qquad\text{for some $k$}\\
& = & \max(\sem{\psi^k_2}{\Delta_1} - \sem{\psi^k_2}{\Delta^k_2},~0)\\
& > & \epsilon \qquad\mbox{by (\ref{eq:psii})}
\end{array}\]
On the other hand, we have
\[\begin{array}{rcl}
\sem{\varphi}{t} & = & \max_{t\trans[a]\Delta^i_2}\sem{\bigwedge_j(\psi^j_2 \ominus \sem{\psi^j_2}{\Delta^j_2})}{\Delta^i_2}\\
& = & \max_{t\trans[a]\Delta^i_2}\min_j \sem{\psi^j_2 \ominus \sem{\psi^j_2}{\Delta^j_2}}{\Delta^i_2} \\
& = & \max_{t\trans[a]\Delta^i_2}\min_j \max((\sem{\psi^j_2}{\Delta^i_2} - \sem{\psi^j_2}{\Delta^j_2}),~0) \\
%& = & \max_{t\trans[a]\Delta^i_2}\min_i 0\\
& = & 0
\end{array}\]
It follows that $\bisimdlogicstrong(s,t) \geq \sem{\varphi}{s} - \sem{\varphi}{t} > \epsilon$, which gives rise to a contradiction.
\end{proof}

By combining the above three technical lemmas we obtain the following logical characterisation of the state-based bisimilairty metric.
\begin{theorem}
$\bisimdstrong=\bisimdlogicstrong$ and $\Kantorovich(\bisimdstrong)=\bisimddlogicstrong$
\hfill\qed
\end{theorem}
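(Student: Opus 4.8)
The plan is to assemble the theorem directly from the three technical lemmas already established, so the proof will be short. First I would prove the state-level equality $\bisimdstrong = \bisimdlogicstrong$: Lemma~\ref{lem:dldst}(1) gives $\bisimdlogicstrong \sqsubseteq \bisimdstrong$ and Lemma~\ref{lem:stl} gives $\bisimdstrong \sqsubseteq \bisimdlogicstrong$, so by antisymmetry of the lattice order $\sqsubseteq$ on $[0,1]^{S\times S}$ the two pseudometrics coincide as functions.

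Next I would handle the distribution-level equality $\Kantorovich(\bisimdstrong) = \bisimddlogicstrong$. One inclusion, $\bisimddlogicstrong \sqsubseteq \Kantorovich(\bisimdstrong)$, is exactly Lemma~\ref{lem:dldst}(2). For the reverse inclusion I would first observe that $\Kantorovich$ is monotone with respect to $\sqsubseteq$: if $d \sqsubseteq d'$ then for every matching $\omega \in \Omega(\Delta,\Theta)$ we have $\sum_{s,t} d(s,t)\omega(s,t) \le \sum_{s,t} d'(s,t)\omega(s,t)$, and taking minima over $\omega$ preserves the inequality. Since we have already shown $\bisimdstrong = \bisimdlogicstrong$ as functions, this yields $\Kantorovich(\bisimdstrong) = \Kantorovich(\bisimdlogicstrong)$. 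Then Lemma~\ref{lem:dist.logic} gives $\Kantorovich(\bisimdlogicstrong) \sqsubseteq \bisimddlogicstrong$, hence $\Kantorovich(\bisimdstrong) \sqsubseteq \bisimddlogicstrong$, and combining with the first inclusion finishes the proof.

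There is essentially no hard step here — the real work was done in Lemmas~\ref{lem:dldst}, \ref{lem:dist.logic}, and \ref{lem:stl}, whose mutual dependency (state and distribution formulae being interdefined) was the genuine obstacle. The only point that deserves an explicit word is the monotonicity of $\Kantorovich$ invoked above, which is immediate from its definition as a minimum of linear functionals in the matching $\omega$. I would state that observation in one sentence and then close the argument.

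\begin{proof}
By Lemma~\ref{lem:dldst}(1) we have $\bisimdlogicstrong \sqsubseteq \bisimdstrong$, and by Lemma~\ref{lem:stl} we have $\bisimdstrong \sqsubseteq \bisimdlogicstrong$. Since $\sqsubseteq$ is a partial order, it follows that $\bisimdstrong = \bisimdlogicstrong$.

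For the second equality, Lemma~\ref{lem:dldst}(2) gives $\bisimddlogicstrong \sqsubseteq \Kantorovich(\bisimdstrong)$. Conversely, note that $\Kantorovich$ is monotone: if $d \sqsubseteq d'$ then for every $\omega \in \Omega(\Delta,\Theta)$ we have $\sum_{s,t\in S} d(s,t)\cdot\omega(s,t) \le \sum_{s,t\in S} d'(s,t)\cdot\omega(s,t)$, so minimising over $\omega$ yields $\Kantorovich(d)(\Delta,\Theta) \le \Kantorovich(d')(\Delta,\Theta)$. Applying this to the equality $\bisimdstrong = \bisimdlogicstrong$ established above gives $\Kantorovich(\bisimdstrong) = \Kantorovich(\bisimdlogicstrong)$, and then Lemma~\ref{lem:dist.logic} yields $\Kantorovich(\bisimdstrong) = \Kantorovich(\bisimdlogicstrong) \sqsubseteq \bisimddlogicstrong$. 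Combining the two inclusions we obtain $\Kantorovich(\bisimdstrong) = \bisimddlogicstrong$.
\end{proof}
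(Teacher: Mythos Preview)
Your proof is correct and follows exactly the route the paper intends: the theorem is simply the combination of Lemmas~\ref{lem:dldst}, \ref{lem:dist.logic}, and \ref{lem:stl}, and you have spelled out the assembly in more detail than the paper does (which merely states that the result follows from the three lemmas). One minor remark: you invoke monotonicity of $\Kantorovich$, but since you already have the \emph{equality} $\bisimdstrong = \bisimdlogicstrong$ rather than a mere inequality, the identity $\Kantorovich(\bisimdstrong) = \Kantorovich(\bisimdlogicstrong)$ is immediate without any appeal to monotonicity.
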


\begin{remark}
In the proof of Lemma~\ref{lem:stl} we have constructed the formula 
\begin{equation}\label{eq:varphi}
\varphi=\dia{a}\bigwedge_i(\psi^i_2 \ominus \sem{\psi^i_2}{\Delta^i_2})
\end{equation}
by making use of conjunction and minus connectives for distribution formulae. This happens because in the presence of non-determinism state $t$ may perform action $a$ and then evolves into one of successor distributions $\Delta^i_2$. If we confine ourselves to deterministic pLTSs, then state $t$ will have a unique successor distribution $\Delta^i_2$ and therefore (\ref{eq:varphi}) can be simplified as 
$\varphi=\dia{a}\psi^i_2$. In this case, there is no need of  conjunction and minus connectives for distribution formulae. Furthermore, if we fold $[\varphi]$ into state formulae in Definition~\ref{def:mHML}, distribution formulae can be completely dropped. In other words, for deterministic pLTSs, the state-based bisimilarity metric can be characterised by the following metric logic
\begin{equation}\label{eq:logic}	\varphi ::=  \top \ |\ \lnot\varphi \ |\ \varphi\ominus p \ |\ \varphi_1 \land \varphi_2 \ | \ \dia{a}\varphi
\end{equation}
Therefore, for deterministic pLTSs, the two-sorted logic in Definition~\ref{def:mHML} degenerates into the logic considered in \cite{DGJP04,BW05}, as expected.

In \cite{DAMRS08,CDAMR10} a bisimulation metric for game structures is characterised by a quantitative $\mu$-calculus where formulae are valuated also on states and no distribution formula is needed. This is not surprising because games are deterministic: at any state $s$, if two players have chosen their moves, say $a_1$ and $a_2$, then there is a unique distribution $\delta(s,a_1,a_2)$ to determine the probabilities of arriving at a set of destination states.
\end{remark}

%\begin{conjecture}
%$\bisimdconvex=\bisimdlogicconvex$
%\end{conjecture}

\section{Distribution-Based Bisimulation Metric}\label{sec:dist}
The bisimilarity metric given in Definition~\ref{def:sb} measures the distance between two states. Alternatively, it is possible to directly define a
 metric that measures subdistributions. In order to do so, we first define a transition
relation between subdistributions.
\begin{definition}
We write $\Delta\trans[a]\Delta'$ if 
$\Delta'=\sum_{s\in\support{\Delta}}\Delta(s)\cdot\Delta_s$, where $\Delta_s$ is
determined as follows:
\begin{itemize}
\item 
  either $s  \trans[a] \Delta_s$
\item 
  or there is no $\Theta$ with $s\trans[a]\Theta$, and in this case we set
  $\Delta_s=\eDis$.
\end{itemize}
\end{definition}
Note that if $\Delta\trans[a]\Delta'$ then some (not
necessarily all) states in the support of $\Delta$ can perform action
$a$.  For example, consider the two states $s_2$ and $s_3$ in
Figure~\ref{fig:exam1}. Since $s_2\trans[c]\pdist{s_4}$ and $s_3$
cannot perform action $c$, the distribution
$\Delta=\frac{1}{2}\pdist{s_2}+\frac{1}{2}\pdist{s_3}$ can make the
transition $\Delta\trans[c]\frac{1}{2}\pdist{s_4}$ to reach the
subdistribution $\frac{1}{2}\pdist{s_4}$. 
\begin{definition}
A $1$-bounded pseudometric $d$ on $\dist{S}$ is a \emph{distribution-based bisimulation metric} if $|\;|\Delta_1|-|\Delta_2|\;| \leq d(\Delta_1,\Delta_2)$ and 
for all $\Delta_1,\Delta_2\in \dist{S}$ and $\epsilon \in [0,1)$ with $d(\Delta_1,\Delta_2) \le \epsilon$, if $\Delta_1 \trans[a] \Delta'_1$ then there exists an $\Delta_2 \trans[a] \Delta'_2$ with $d(\Delta'_1,\Delta'_2) \le \epsilon$.
\end{definition}
The smallest (wrt. $\sqsubseteq$) distribution-based bisimulation metric, notation $\bisimddist$, is called \emph{distribution-based bisimilarity metric}. 
%It assigns to each pair of processes the least possible distance. 
Distribution-based bisimilarity \cite{DFD15} is the kernel of the distribution-based bisimilarity metric.

It is not difficult to see that $\bisimdstrong$ is different from $\bisimddist$, as witnessed by the following example.
\begin{example}
Consider the states in Figure~\ref{fig:exam1}.
We first observe that $\bisimddist(\pdist{s_2},\pdist{t_3})=0$ because $s_2$ and $t_3$ can match each other's action exactly. Similarly, we have $\bisimddist(\pdist{s_3},\pdist{t_4})=0$. Then it is easy to see that $\bisimddist(\frac{1}{2}\pdist{s_2}+\frac{1}{2}\pdist{s_3}, \frac{1}{2}\pdist{t_3}+\frac{1}{2}\pdist{t_4})=0$. Since $s_1\trans[b] \frac{1}{2}\pdist{s_2}+\frac{1}{2}\pdist{s_3}$ and 
$\frac{1}{2}\pdist{t_1}+\frac{1}{2}\pdist{t_2}$, we infer that
$\bisimddist(\pdist{s_1}, \frac{1}{2}\pdist{t_1}+\frac{1}{2}\pdist{t_2})=0$. It, in turn, implies $\bisimddist(\pdist{s},\pdist{t})=0$. We have already seen in Example~\ref{ex:exam1} that $\bisimdstrong(s,t)=\frac{1}{2}$. Therefore, the two distance functions $\bisimdstrong$ and $\bisimddist$ are indeed different.
\end{example}

The rest of this section is devoted to a logical characterisation of $\bisimddist$.
Consider the  metric logic $\logic^{D*}$ whose formulae are defined below
\begin{equation}\label{eq:logic1}	\psi ::=  \top \ |\ \lnot\psi \ |\ \psi\ominus p \ |\ \psi_1 \land \psi_2 \ | \ \dia{a}\psi
\end{equation}
This logic is the same as that defined in (\ref{eq:logic}) except that now we only have distribution formulae. We will show that this logic can capture the distribution-based bisimilarity metric.

\begin{definition}
A formula $\psi \in \logic^{D*}$ evaluates in $\Delta \in \dist{S}$ as follows:
\begin{itemize}[\label={}]
	\item $\displaystyle \sem{\top}{\Delta}=|\Delta|$,\\[-0.5ex]
	\item $\displaystyle \sem{\lnot \psi}{\Delta}=1-\sem{\psi}{\psi}$,\\[-0.5ex]
	\item $\displaystyle \sem{\psi\ominus p}{\Delta}=\max(\sem{\psi}{\psi}-p,~ 0)$,\\[-0.5ex]
	\item $\displaystyle \sem{\psi_1 \land \psi_2}{\Delta}=\min(\sem{\psi_1}{\Delta},\sem{\psi_2}{\Delta})$,\\[-0.5ex]
	\item $\displaystyle \sem{\dia{a}\psi}{\Delta}=\max_{\Delta \trans[a] \Delta'} \sem{\psi}{\Delta'}$,\\[-0.5ex]
%	\item[] $\displaystyle\llbracket\sup_{i\in I}\psi_i\rrbracket(s)=\sup_{i\in I}\llbracket \psi_i\rrbracket(s)$.
\end{itemize}
\end{definition}
This induces a natural logical metric $\bisimddistl$ over subdistributions defined by \[\bisimddistl(\Delta,\Theta) = \sup_{\psi\in \logic^{D}}|\Op{\psi}(\Delta) - \Op{\psi}(\Theta)|\]
It turns out that $\bisimddistl$ coincides with $\bisimddist$. We split the proof of this coincidence result into two parts, to show that one metric is dominated by the other and vice versa.
%---------
\begin{lemma}\label{lem:dldstP}
 $\bisimddistl \sqsubseteq \bisimddist$
\end{lemma}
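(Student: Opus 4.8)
The plan is to show that $\bisimddistl$ is itself a distribution-based bisimulation metric, so that it dominates (in $\sqsubseteq$) the smallest one, namely $\bisimddist$. This mirrors the structure of the proof of Lemma~\ref{lem:stl}, but is actually easier because the logic $\logic^{D*}$ is one-sorted: everything lives at the level of (sub)distributions, so there is no need to coordinate with a state-level metric through a Kantorovich lifting. Concretely, I would first verify that $\bisimddistl$ is a $1$-bounded pseudometric: boundedness holds since all formula values lie in $[0,1]$, and the pseudometric axioms ($d(\Delta,\Delta)=0$, symmetry, triangle inequality) follow routinely from the fact that $\bisimddistl$ is a supremum of absolute differences of real-valued functions. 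I would also check the size condition $|\,|\Delta_1|-|\Delta_2|\,| \le \bisimddistl(\Delta_1,\Delta_2)$ by instantiating the supremum at the formula $\top$, whose semantics is $\sem{\top}{\Delta}=|\Delta|$.

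The core of the argument is the transfer property. Fix $\Delta_1,\Delta_2$ and $\epsilon\in[0,1)$ with $\bisimddistl(\Delta_1,\Delta_2)\le\epsilon$, and suppose $\Delta_1\trans[a]\Delta'_1$. If no $a$-transition from $\Delta_2$ matches within $\epsilon$, then for each $\Delta_2\trans[a]\Delta^i_2$ we have $\bisimddistl(\Delta'_1,\Delta^i_2)>\epsilon$, so there is a formula $\psi^i$ with $\sem{\psi^i}{\Delta'_1}-\sem{\psi^i}{\Delta^i_2}>\epsilon$ (using $\lnot\psi^i$ to fix the sign, exactly as in Lemma~\ref{lem:stl}, equation~(\ref{eq:psii})). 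Since the pLTS is image-finite, the set of such $\Delta^i_2$ is finite, so I can form the single formula $\psi=\dia{a}\bigwedge_i(\psi^i \ominus \sem{\psi^i}{\Delta^i_2})$. The same two computations as in the proof of Lemma~\ref{lem:stl} then give $\sem{\psi}{\Delta_1}>\epsilon$ (because $\Delta_1\trans[a]\Delta'_1$ and $\sem{\psi^i \ominus \sem{\psi^i}{\Delta^i_2}}{\Delta'_1}=\max(\sem{\psi^i}{\Delta'_1}-\sem{\psi^i}{\Delta^i_2},0)>\epsilon$ for every $i$, hence the min over $i$ exceeds $\epsilon$), whereas $\sem{\psi}{\Delta_2}=0$ (for every $a$-successor $\Delta^i_2$ of $\Delta_2$ the conjunct indexed by that same $i$ evaluates to $\max(\sem{\psi^i}{\Delta^i_2}-\sem{\psi^i}{\Delta^i_2},0)=0$, so the min, and thus the max over successors, is $0$). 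This yields $\bisimddistl(\Delta_1,\Delta_2)\ge\sem{\psi}{\Delta_1}-\sem{\psi}{\Delta_2}>\epsilon$, contradicting $\bisimddistl(\Delta_1,\Delta_2)\le\epsilon$. Hence some matching $a$-transition exists, and $\bisimddistl$ is a distribution-based bisimulation metric; minimality of $\bisimddist$ gives $\bisimddist \sqsubseteq \bisimddistl$, i.e.\ $\bisimddistl \sqsubseteq \bisimddist$ after reading the direction correctly — here the conclusion we want is precisely that $\bisimddistl$ is \emph{a} bisimulation metric, so $\bisimddist$, the least one, satisfies $\bisimddist\sqsubseteq\bisimddistl$; but the lemma claims the reverse inclusion, so I must instead argue the soundness direction.

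I should be careful about which inclusion this lemma is: $\bisimddistl \sqsubseteq \bisimddist$ is the \emph{soundness} direction, saying formula differences are bounded by the bisimilarity metric, so the right approach is a structural induction on $\psi\in\logic^{D*}$ showing $|\sem{\psi}{\Delta_1}-\sem{\psi}{\Delta_2}|\le\bisimddist(\Delta_1,\Delta_2)$ for all $\Delta_1,\Delta_2$. The base case $\top$ uses the built-in size condition $|\,|\Delta_1|-|\Delta_2|\,|\le\bisimddist(\Delta_1,\Delta_2)$; the cases $\lnot\psi$, $\psi\ominus p$, and $\psi_1\land\psi_2$ are the same elementary non-expansiveness arguments as in Lemma~\ref{lem:dldst}(1). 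The one interesting case is $\dia{a}\psi$: assuming WLOG $\sem{\dia{a}\psi}{\Delta_1}\ge\sem{\dia{a}\psi}{\Delta_2}$, pick $\Delta_1\trans[a]\Delta'_1$ attaining the max on the left; if $\bisimddist(\Delta_1,\Delta_2)\ge 1$ there is nothing to prove, and otherwise put $\epsilon=\bisimddist(\Delta_1,\Delta_2)\in[0,1)$ and use the defining transfer property of $\bisimddist$ to get $\Delta_2\trans[a]\Delta'_2$ with $\bisimddist(\Delta'_1,\Delta'_2)\le\epsilon$; then $\sem{\dia{a}\psi}{\Delta_1}-\sem{\dia{a}\psi}{\Delta_2}\le\sem{\psi}{\Delta'_1}-\sem{\psi}{\Delta'_2}\le\bisimddist(\Delta'_1,\Delta'_2)\le\epsilon$ by the induction hypothesis. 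The main obstacle is bookkeeping around the boundary value $\epsilon=1$ and the $\inf\emptyset=1$ convention when one side cannot perform $a$ — but since the metric is $1$-bounded, any such case is absorbed trivially, so no real difficulty arises.
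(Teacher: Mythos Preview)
Your first two paragraphs prove the wrong lemma: showing that $\bisimddistl$ is a distribution-based bisimulation metric yields $\bisimddist \sqsubseteq \bisimddistl$, which is precisely Lemma~\ref{lem:stlP}, not Lemma~\ref{lem:dldstP}. You notice this yourself at the end of the second paragraph, but the resulting write-up is confusing to read because two-thirds of it is devoted to the other direction before the self-correction.

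Your third paragraph is the actual proof of the stated lemma, and it is correct and essentially identical to the paper's argument: structural induction on $\psi\in\logic^{D*}$ showing $|\sem{\psi}{\Delta_1}-\sem{\psi}{\Delta_2}|\le\bisimddist(\Delta_1,\Delta_2)$, with the base case $\top$ handled by the size condition built into the definition of a distribution-based bisimulation metric, the connectives $\lnot$, $\ominus p$, $\land$ handled by the same non-expansiveness bookkeeping as in Lemma~\ref{lem:dldst}, and the $\dia{a}$ case handled by invoking the transfer property of $\bisimddist$ and the induction hypothesis. The paper's proof does exactly this (it explicitly says ``Similar to the proof of Lemma~\ref{lem:dldst}'') and does not bother with your boundary remark about $\epsilon=1$, but that omission is harmless since the metric is $1$-bounded. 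Drop the first two paragraphs and keep only the third.
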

\begin{proof}
Similar to the proof of Lemma~\ref{lem:dldst}.
We proceed by structural induction on formulae. For any two subdistributions $\Delta_1,\Delta_2\in \dist{S}$, we prove that
\[ | \sem{\psi}{\Delta_1} - \sem{\psi}{\Delta_2} | \leq \bisimddist(\Delta_1,\Delta_2)\] for all $\psi\in \logic^{D*}$.

We first analyze the structure of $\psi$.
\begin{itemize}
\item $\varphi \equiv \top$. Then it is trivial to see that $| \sem{\psi}{\Delta_1} - \sem{\psi}{\Delta_2} | = |\,|\Delta_1|-|\Delta_2|\,| \leq \bisimddist(\Delta_1,\Delta_2)$.

\item $\psi\equiv \lnot\psi'$. Then $| \sem{\psi}{\Delta_1} - \sem{\psi}{\Delta_2} | = | \sem{\psi'}{\Delta_2} - \sem{\psi'}{\Delta_1} | \leq \bisimddist(\Delta_1,\Delta_2)$ where the inequality holds by induction.

\item $\psi \equiv \psi'\ominus p$. There are four subcases and we consider one of them. Suppose $\sem{\psi'}{\Delta_1}>p$ and $\sem{\psi'}{\Delta_2}\leq p$, then 
$|\sem{\psi}{\Delta_1}-\sem{\psi}{\Delta_2}| = |\sem{\psi'}{\Delta_1}-p| \leq |\sem{\psi'}{\Delta_1} - \sem{\psi'}{\Delta_2}| \leq \bisimddist(\Delta_1,\Delta_2)$ by induction.

\item $\psi\equiv \psi_1 \land \psi_2$. Without loss of generality we assume that $\sem{\psi}{\Delta_1} \geq \sem{\psi}{\Delta_2}$. There are two possibilities:
\begin{itemize}
\item If $\sem{\psi_1}{\Delta_2} \leq \sem{\psi_2}{\Delta_2}$, then $\sem{\psi}{\Delta_1} - \sem{\psi}{\Delta_2} \leq \sem{\psi_1}{\Delta_1} - \sem{\psi_1}{\Delta_2} \leq \bisimddist(\Delta_1,\Delta_2)$, where the last inequality holds by induction.

\item Symmetrically, if  $\sem{\psi_2}{\Delta_2} \leq \sem{\psi_1}{\Delta_2}$, then $\sem{\psi}{\Delta_1} - \sem{\psi}{\Delta_2} \leq \sem{\psi_2}{\Delta_1} - \sem{\psi_2}{\Delta_2} \leq \bisimddist(\Delta_1,\Delta_2)$.
\end{itemize}

\item $\psi\equiv \dia{a}\psi'$.  Let $\Delta'_1$ be the distribution such that $\Delta_1 \trans[a] \Delta'_1$ and $\sem{\dia{a}\psi'}{\Delta_1} = \sem{\psi'}{\Delta'_1}$. 
Since $\bisimddist$ is a distribution-based bisimulation metric, by definition there exists some $\Delta'_2$ such that $\Delta_2\trans[a]\Delta'_2$ and $\bisimddist(\Delta'_1,\Delta'_2) \leq \bisimddist(\Delta_1,\Delta_2)$. Without loss of generality we assume that $\sem{\psi}{\Delta_1} \geq \sem{\psi}{\Delta_2}$. It follows that
\[\begin{array}{rl}
& \sem{\psi}{\Delta_1} - \sem{\psi}{\Delta_2} \\
= & \sem{\psi'}{\Delta'_1} - \max_{\Delta_2 \trans[a] {\Delta''_2}} \sem{\psi'}{\Delta''_2} \\
\leq & \sem{\psi'}{\Delta'_1} - \sem{\psi'}{\Delta'_2} \\
\leq & \bisimddist(\Delta'_1,\Delta'_2) \qquad\mbox{by induction on $\psi'$}\\
\leq & \bisimddist(\Delta_1,\Delta_2) 
\end{array}\]
\end{itemize}
\end{proof}

\begin{lemma}\label{lem:stlP}
$\bisimddist \sqsubseteq \bisimddistl$
\end{lemma}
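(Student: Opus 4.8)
The plan is to mimic the proof of Lemma~\ref{lem:stl}, but working entirely with distribution formulae and the subdistribution transition relation. Concretely, I would show that $\bisimddistl$ is itself a distribution-based bisimulation metric; since $\bisimddist$ is by definition the smallest such metric (w.r.t.\ $\sqsubseteq$), this yields $\bisimddist \sqsubseteq \bisimddistl$. There are two obligations to check. First, the mass-difference condition $|\,|\Delta_1|-|\Delta_2|\,| \le \bisimddistl(\Delta_1,\Delta_2)$: this is immediate by taking $\psi = \top$, since $\sem{\top}{\Delta}=|\Delta|$, so $|\,|\Delta_1|-|\Delta_2|\,| = |\sem{\top}{\Delta_1}-\sem{\top}{\Delta_2}| \le \sup_{\psi}|\sem{\psi}{\Delta_1}-\sem{\psi}{\Delta_2}| = \bisimddistl(\Delta_1,\Delta_2)$. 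Second, the transfer condition.

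For the transfer condition, fix $\Delta_1,\Delta_2$ and $\epsilon\in[0,1)$ with $\bisimddistl(\Delta_1,\Delta_2)\le\epsilon$, and a transition $\Delta_1\trans[a]\Delta'_1$. I would first argue that $\Delta_2$ can perform $a$ at all — i.e.\ $\der$ via the lifted relation is non-empty; here a small subtlety appears because every subdistribution can perform a ``degenerate'' $a$-transition (using $\eDis$ for states with no $a$-move), so unlike the state-based case $\Delta_2\trans[a]$ is always available, and I need not worry about the vacuous case. Then I argue by contradiction: suppose every $\Delta_2\trans[a]\Delta^i_2$ (finitely many, by image-finiteness lifted to subdistributions) has $\bisimddistl(\Delta'_1,\Delta^i_2)>\epsilon$. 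For each $i$ pick $\psi^i \in \logic^{D*}$ with $|\sem{\psi^i}{\Delta'_1}-\sem{\psi^i}{\Delta^i_2}|>\epsilon$, and replace $\psi^i$ by $\lnot\psi^i$ if needed so that $\sem{\psi^i}{\Delta'_1}-\sem{\psi^i}{\Delta^i_2}>\epsilon$. Now set $\psi = \dia{a}\bigwedge_i(\psi^i \ominus \sem{\psi^i}{\Delta^i_2})$. Computing exactly as in Lemma~\ref{lem:stl}: $\sem{\psi}{\Delta_1} \ge \sem{\bigwedge_i(\psi^i\ominus\sem{\psi^i}{\Delta^i_2})}{\Delta'_1} = \min_i \max(\sem{\psi^i}{\Delta'_1}-\sem{\psi^i}{\Delta^i_2},0) > \epsilon$, while $\sem{\psi}{\Delta_2} = \max_{\Delta_2\trans[a]\Delta^j_2}\min_i \max(\sem{\psi^i}{\Delta^j_2}-\sem{\psi^i}{\Delta^i_2},0) = 0$, since for each $j$ the $i=j$ conjunct vanishes. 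Hence $\bisimddistl(\Delta_1,\Delta_2)\ge\sem{\psi}{\Delta_1}-\sem{\psi}{\Delta_2}>\epsilon$, contradicting $\bisimddistl(\Delta_1,\Delta_2)\le\epsilon$.

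The main obstacle I anticipate is not the algebra — which is essentially identical to Lemma~\ref{lem:stl} — but two bookkeeping points specific to the distribution setting. One is verifying that the set $\{\Delta'_2 \mid \Delta_2\trans[a]\Delta'_2\}$ is finite, so that the conjunction $\bigwedge_i$ is a legitimate (finite) formula; this should follow from image-finiteness of the pLTS together with the fact that $\support{\Delta_2}$ is finite (the system has finitely many states), since $\Delta'_2$ is a finite convex combination of choices $\Delta_s\in\der(s,a)\cup\{\eDis\}$. The other is making sure the whole argument only ever uses $\Delta_2\trans[a]$, never any claim that $\Delta_2$ ``cannot do $a$'', because degenerate transitions are always present; this actually simplifies matters relative to the state-based proof. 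Once these are dispatched the contradiction closes exactly as above, giving $\bisimddist\sqsubseteq\bisimddistl$. \qed
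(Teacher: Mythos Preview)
Your proposal is correct and follows essentially the same route as the paper's own proof: show that $\bisimddistl$ is a distribution-based bisimulation metric by the same contradiction argument, constructing the formula $\dia{a}\bigwedge_i(\psi^i \ominus \sem{\psi^i}{\Delta^i_2})$ exactly as in Lemma~\ref{lem:stl}. In fact your write-up is slightly more careful than the paper's: you explicitly discharge the mass-difference obligation $|\,|\Delta_1|-|\Delta_2|\,|\le\bisimddistl(\Delta_1,\Delta_2)$ via $\psi=\top$ (the paper's proof simply omits this check), and you flag the finiteness of $\{\Delta'_2\mid\Delta_2\trans[a]\Delta'_2\}$ and the always-availability of degenerate $a$-transitions, both of which the paper leaves implicit.
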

\begin{proof}
Similar to the proof of Lemma~\ref{lem:stl}.
We show that $\bisimddistl$ is a distribution-based bisimulation metric. Let $\Delta_1,\Delta_2$ be any two subdistributions in $\dist{S}$ and $\epsilon$ be any real number in the interval $[0,1)$ with $\bisimddistl(\Delta_1,\Delta_2)\leq\epsilon$. Assume that $\Delta_1\trans[a]\Delta'_1$ is an arbitrarily chosen transition from $\Delta_1$.  We need to show that there exists some transition $\Delta_2\trans[a]\Delta'_2$ with $\bisimddistl(\Delta'_1,\Delta'_2)\leq\epsilon$. 
Suppose for a contradiction that no $a$-transition from $\Delta_2$ satisfies this condition.
In other words, for each $\Delta^i_2$ with $\Delta_2\trans[a]\Delta^i_2$ we have $\bisimddistl(\Delta'_1,\Delta^i_2) > \epsilon$.  Then
there must exist some formula $\psi^i_2 \in \logicdist$ such that 
$|\sem{\psi^i_2}{\Delta'_1}-\sem{\psi^i_2}{\Delta^i_2}| > \epsilon$.
Furthermore, we can strengthen this condition to the following one
\begin{equation}\label{eq:psii1}
	\sem{\psi^i_2}{\Delta'_1}-\sem{\psi^i_2}{\Delta^i_2} > \epsilon
\end{equation}
because we can take the formula $\lnot\psi^i_2$ in place of $\psi^i_2$ in the case that $\sem{\psi^i_2}{\Delta'_1} < \sem{\psi^i_2}{\Delta_2}$. 
Let $\varphi=\dia{a}\bigwedge_i(\psi^i_2 \ominus \sem{\psi^i_2}{\Delta^i_2})$. We infer that
\[\begin{array}{rcl}
\sem{\varphi}{\Delta_1} & = & \max_{\Delta_1\trans[a]\Delta}\sem{\bigwedge_i\psi^i_2 \ominus \sem{\psi^i_2}{\Delta^i_2}}{\Delta}\\
& \geq & \sem{\bigwedge_i(\psi^i_2 \ominus \sem{\psi^i_2}{\Delta^i_2})}{\Delta'_1} \\
& = & \min_i \sem{\psi^i_2 \ominus \sem{\psi^i_2}{\Delta^i_2}}{\Delta'_1} \\
& = & \sem{\psi^k_2 \ominus \sem{\psi^k_2}{\Delta^k_2}}{\Delta'_1} \qquad\text{for some $k$}\\
& = & \max(\sem{\psi^k_2}{\Delta'_1} - \sem{\psi^k_2}{\Delta^k_2},~0)\\
& > & \epsilon \qquad\mbox{by (\ref{eq:psii1})}
\end{array}\]
On the other hand, we have
\[\begin{array}{rcl}
\sem{\psi}{\Delta_2} & = & \max_{\Delta_2\trans[a]\Delta^i_2}\sem{\bigwedge_j(\psi^j_2 \ominus \sem{\psi^j_2}{\Delta^j_2})}{\Delta^i_2}\\
& = & \max_{\Delta_2\trans[a]\Delta^i_2}\min_j \sem{\psi^j_2 \ominus \sem{\psi^j_2}{\Delta^j_2}}{\Delta^i_2} \\
& = & \max_{\Delta_2\trans[a]\Delta^i_2}\min_j \max((\sem{\psi^j_2}{\Delta^i_2} - \sem{\psi^j_2}{\Delta^j_2}),~0) \\
%& = & \max_{t\trans[a]\Delta^i_2}\min_i 0\\
& = & 0
\end{array}\]
It follows that $\bisimddistl(\Delta_1,\Delta_2) \geq \sem{\psi}{\Delta_1} - \sem{\psi}{\Delta_2} > \epsilon$, which gives rise to a contradiction.
\end{proof}

By combining the previous two lemmas, we arrive at the following logical characterisation of the distribution-based bisimulation metric.
\begin{theorem}
$\bisimddist = \bisimddistl$
\hfill\qed
\end{theorem}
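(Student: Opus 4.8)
The plan is to obtain the equality as the conjunction of the two inequalities already established in Lemma~\ref{lem:dldstP} and Lemma~\ref{lem:stlP}, using the fact that $\sqsubseteq$ is a partial order on $[0,1]^{\dist{S}\times\dist{S}}$ (in particular, it is antisymmetric, since it is defined pointwise from the usual order on $[0,1]$). Concretely, Lemma~\ref{lem:dldstP} gives $\bisimddistl \sqsubseteq \bisimddist$ and Lemma~\ref{lem:stlP} gives $\bisimddist \sqsubseteq \bisimddistl$; combining the two yields $\bisimddist = \bisimddistl$, i.e. $\bisimddist(\Delta,\Theta) = \bisimddistl(\Delta,\Theta)$ for all subdistributions $\Delta,\Theta \in \dist{S}$.

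The only point worth spelling out, for the reader's convenience, is why the two lemmas together are exactly what is needed: soundness of the logic (every distribution formula $\psi$ induces a non-expansive map with respect to $\bisimddist$) is precisely the content of Lemma~\ref{lem:dldstP}, while completeness (the logical metric $\bisimddistl$ is itself a distribution-based bisimulation metric, hence at least as large as the smallest one, $\bisimddist$) is the content of Lemma~\ref{lem:stlP}. There is no remaining calculation to perform here; the work has been front-loaded into the two lemmas, mirroring the structure used for the state-based characterisation, where the analogous three lemmas (Lemmas~\ref{lem:dldst}, \ref{lem:dist.logic} and \ref{lem:stl}) were combined into the earlier theorem $\bisimdstrong=\bisimdlogicstrong$.

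I do not anticipate any genuine obstacle in this final step; the difficulty of the section lies entirely in Lemma~\ref{lem:stlP}, whose proof must construct, for a would-be matching $a$-transition $\Delta_2 \trans[a] \Delta_2^i$ that fails to be close to $\Delta_1'$, a single distribution formula $\varphi = \dia{a}\bigwedge_i(\psi_2^i \ominus \sem{\psi_2^i}{\Delta_2^i})$ separating $\Delta_1$ from $\Delta_2$ by more than $\epsilon$; once that is in place, as well as the routine structural induction of Lemma~\ref{lem:dldstP}, the theorem is immediate by antisymmetry of $\sqsubseteq$.

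\begin{proof}
Immediate from Lemma~\ref{lem:dldstP} and Lemma~\ref{lem:stlP}: the former gives $\bisimddistl \sqsubseteq \bisimddist$ and the latter gives $\bisimddist \sqsubseteq \bisimddistl$, so by antisymmetry of $\sqsubseteq$ we conclude $\bisimddist = \bisimddistl$.
\end{proof}
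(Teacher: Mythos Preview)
Your proposal is correct and matches the paper's approach exactly: the paper also obtains the theorem immediately by combining Lemma~\ref{lem:dldstP} and Lemma~\ref{lem:stlP}, with no further argument given.
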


\section{Related work}
\label{sec:relwork}

Metrics for probabilistic transition systems are first suggested by 
Giacalone {\em et al.}  to formalize
a notion of distance between processes.  They are used also in
\cite{KN96,Nor97} to give denotational semantics for reactive
models. De Vink and Rutten \cite{dVR99} show that discrete
probabilistic transition systems can be viewed as coalgebras.  They
consider the category of complete ultrametric spaces. Similar
ultrametric spaces are considered by den Hartog in \cite{Har02}. 

%The works most related to ours are \cite{DGJP04,DJGP02,BW01a,BW05}.
Metrics for deterministic systems are extensively studied.
Desharnais {\em et al.} \cite{DGJP04} propose a logical pseudometric
for labelled Markov chains, which is a reactive model of probabilistic
systems.  A similar pseudometric 
 is defined by van Breugel and Worrell
\cite{BW01b} via the terminal coalgebra of a functor based on a metric
on the space of Borel probability measures. The metric of \cite{DGJP04,BW05} works for
continuous probabilistic transition systems, while in this work we
concentrate on discrete systems.
Interestingly, van
Breugel and Worrell \cite{BW01a} also present a polynomial-time
algorithm to approximate their coalgebraic distances. Furthermore, van Breugel \emph{et al.} propose an algorithm to approximate a behavioural pseudometric without discount \cite{BSW08}.  In \cite{FPP11} a sampling algorithm for calculating bisimulation distances in Markov decision processes is shown to have good performance.
In \cite{DAMRS07,DAMRS08} the probabilistic bisimulation metric on game structures is characterised by a quantitative $\mu$-caluclus. Algorithms for game metrics are proposed in \cite{CDAMR10,Ram10}. 

Metrics for nondeterministic probabilistic systems are considered in
\cite{DJGP02}, where Desharnais {\em et al.} deal with labelled concurrent
Markov chains (similar to pLTSs, this model can be captured by the simple probabilistic
automata of \cite{Seg95a}).  They show that the greatest fixed point
of a monotonic function on pseudometrics corresponds to the weak
probabilistic bisimilarity of \cite{PLS00}. 

In \cite{SDC07} a notion of trace metric is proposed for pLTSs and a tool is developed to compute the trace metric. In \cite{DCPP06} a notion of bisimulation metric is proposed that extends the approach of \cite{DJGP02,DGJP04} to a more general framework called action-labelled quantitative transition systems.

In \cite{DAFS09} de Alfaro \emph{et al.} consider metric transition systems in which the propositions at each state are interpreted as elements of metric spaces. In that setting, trace equivalence and bisimulation give rise to linear and branching distances that can be characterised by quantitative versions of linear-time temporal logic \cite{MP91} and $\mu$-calculus \cite{Koz83}.

In \cite{Yin02} Ying proposes a
notion of  bisimulation index for the usual labelled transition systems,
by using ultrametrics on actions instead of using pseudometrics on
states. He applies bisimulation indexes on timed CCS and real time
ACP. But the deeper connection between \cite{Yin02} and our work
worths some further studies.

In \cite{GLT15} a notion of uniform continuity is proposed to be an appropriate property of probabilistic processes for compositional reasoning with respect to bisimulation metric semantics.

% ==========================================================================

\section{Concluding remarks}\label{sec:conclu}
We have considered two behavioural pseudometrics for probabilistic labelled transition systems where nondeterminism and probabilities co-exist. They correspond to state-based and distribution-based bisimulations. Our modal characterisation of the state-based bisimulation metric is much simpler than an earlier proposal by Desharnais \emph{et al.} since we only use two non-expansive operators, negation and testing, rather than the general class of non-expansive operators. Our modal characterisaton of the distribution-based bisimulation metric is new. The characterisations are shown to be sound and complete. 

In the current work we have not distinguished internal actions from external ones. But it is not difficult to make the distinction and abstract away internal actions so as to introduce weak versions of bisimulation metrics. In a finite-state and finitely branching pLTS, the subdistributions reachable from a state by weak transitions is infinite but can be represented by the convex closure of a finite set \cite{Deng15}. This entails that the logical characterisation of weak bisimulation metrics would be similar to those presented here.

\bibliographystyle{alpha}
\bibliography{paper}

\begin{thebibliography}{CDAMR10}

\bibitem[CDAMR10]{CDAMR10}
Krishnendu Chatterjee, Luca De~Alfaro, Rupak Majumdar, and Vishwanath Raman.
\newblock Algorithms for game metrics.
\newblock {\em Logical Methods in Computer Science}, 6(3), 2010.

\bibitem[DAFS09]{DAFS09}
Luca De~Alfaro, Marco Faella, and Mari{\"e}lle Stoelinga.
\newblock Linear and branching system metrics.
\newblock {\em IEEE Transactions on Software Engineering}, 35(2):258--273,
  2009.

\bibitem[DAMRS07]{DAMRS07}
L.~De~Alfaro, R.~Majumdar, V.~Raman, and M.~Stoelinga.
\newblock Game relations and metrics.
\newblock In {\em Proc.~LICS'07}, pages 99--108. IEEE, 2007.

\bibitem[DAMRS08]{DAMRS08}
Luca De~Alfaro, Rupak Majumdar, Vishwanath Raman, and Mari{\"e}lle Stoelinga.
\newblock Game refinement relations and metrics.
\newblock {\em arXiv preprint arXiv:0806.4956}, 2008.

\bibitem[DCPP06]{DCPP06}
Yuxin Deng, Tom Chothia, Catuscia Palamidessi, and Jun Pang.
\newblock Metrics for action-labelled quantitative transition systems.
\newblock {\em ENTCS}, 153(2):79--96, 2006.

\bibitem[Den15]{Deng15}
Yuxin Deng.
\newblock {\em Semantics of Probabilistic Processes: An Operational Approach}.
\newblock Springer, 2015.

\bibitem[DFD15]{DFD15}
Yuxin Deng, Yuan Feng, and Ugo {Dal Lago}.
\newblock On coinduction and quantum lambda calculi.
\newblock In {\em Proceedings of the 26th International Conference on
  Concurrency Theory}, pages 427--440. LIPICS, 2015.

\bibitem[DGJP04]{DGJP04}
Josée Desharnais, Vineet Gupta, Radha Jagadeesan, and Prakash Panangaden.
\newblock Metrics for labelled markov processes.
\newblock {\em Theor. Comput. Sci.}, 318(3):323--354, 2004.

\bibitem[dH02]{Har02}
J.~I. den Hartog.
\newblock {\em Probabilistic Extensions of Semantical Models}.
\newblock PhD thesis, Free University Amsterdam, 2002.

\bibitem[DJGP02]{DJGP02}
Josée Desharnais, Radha Jagadeesan, Vineet Gupta, and Prakash Panangaden.
\newblock The metric analogue of weak bisimulation for probabilistic processes.
\newblock In {\em Proc.~LICS'02}, pages 413--422. IEEE, 2002.

\bibitem[DTW10]{DTW10}
Pedro~R. D'Argenio, Pedro~S{\'a}nchez Terraf, and Nicol{\'a}s Wolovick.
\newblock Bisimulations for nondeterministic labeled markov processes.
\newblock {\em CoRR}, abs/1011.3362, 2010.

\bibitem[dVR99]{dVR99}
E.~P. de~Vink and J.~J. M.~M. Rutten.
\newblock Bisimulation for probabilistic transition systems: {A} coalgebraic
  approach.
\newblock {\em Theoretical Computer Science}, 221(1/2):271--293, 1999.

\bibitem[FPP11]{FPP11}
Norm Ferns, Prakash Panangaden, and Doina Precup.
\newblock Bisimulation metrics for continuous markov decision processes.
\newblock {\em {SIAM} J. Comput.}, 40(6):1662--1714, 2011.

\bibitem[FZ14]{FZ14}
Yuan Feng and Lijun Zhang.
\newblock When equivalence and bisimulation join forces in probabilistic
  automata.
\newblock In {\em Proceedings of the 19th International Symposium on Formal
  Methods}, volume 8442 of {\em Lecture Notes in Computer Science}, pages
  247--262. Springer, 2014.

\bibitem[GLT15]{GLT15}
Daniel Gebler, Kim~Guldstrand Larsen, and Simone Tini.
\newblock Compositional metric reasoning with probabilistic process calculi.
\newblock In {\em Proceedings of the 18th International Conference on
  Foundations of Software Science and Computation Structures}, volume 9034 of
  {\em Lecture Notes in Computer Science}, pages 230--245. Springer, 2015.

\bibitem[Hen12]{Hen12}
Matthew Hennessy.
\newblock Exploring probabilistic bisimulations, part i.
\newblock {\em Formal Aspects of Computing}, 24(4-6):749--768, 2012.

\bibitem[HKK14]{HKK14}
Holger Hermanns, Jan Krc{\'{a}}l, and Jan Kret{\'{\i}}nsk{\'{y}}.
\newblock Probabilistic bisimulation: Naturally on distributions.
\newblock In {\em Proceedings of the 25th International Conference on
  Concurrency Theory}, volume 8704 of {\em Lecture Notes in Computer Science},
  pages 249--265. Springer, 2014.

\bibitem[HM85]{HM85}
Matthew Hennessy and Robin Milner.
\newblock Algebraic laws for nondeterminism and concurrency.
\newblock {\em J. ACM}, 32:137--161, 1985.

\bibitem[KN96]{KN96}
M.~Z. Kwiatkowska and G.~J. Norman.
\newblock Probabilistic metric semantics for a simple language with recursion.
\newblock In {\em Proceedings of 21st International Symposium on Mathematical
  Foundations of Computer Science}, volume 1113 of {\em Lecture Notes in
  Computer Science}, pages 419--430. Springer-Verlag, 1996.

\bibitem[Koz83]{Koz83}
D.~Kozen.
\newblock Results on the propositional mu-calculus.
\newblock {\em Theoretical Computer Science}, 27:333--354, 1983.

\bibitem[KR58]{KR58}
L.V. Kantorovich and G.S. Rubinshtein.
\newblock On a space of totally additive functions.
\newblock {\em Vestn Len. Univ.}, 13(7):52--59, 1958.

\bibitem[LS91]{LS91}
Kim~G. Larsen and Arne Skou.
\newblock Bisimulation through probabilistic testing.
\newblock {\em Inf. Comput.}, 94:1--28, 1991.

\bibitem[MP91]{MP91}
Z.~Manna and A.~Pnueli.
\newblock {\em The Temporal Logic of Reactive and Concurrent Systems:
  Specification}.
\newblock Springer-Verlang, 1991.

\bibitem[Nor97]{Nor97}
G.~J. Norman.
\newblock {\em Metric Semantics for Reactive Probabilistic Systems}.
\newblock PhD thesis, University of Birmingham, 1997.

\bibitem[PLS00]{PLS00}
Anna Philippou, Insup Lee, and Oleg Sokolsky.
\newblock Weak bisimulation for probabilistic systems.
\newblock In {\em Proceedings of the 11th International Conference on
  Concurrency Theory}, volume 1877 of {\em Lecture Notes in Computer Science},
  pages 334--349. Springer-Verlag, 2000.

\bibitem[Ram10]{Ram10}
Vishwanath Raman.
\newblock {\em Game Relations, Metrics and Refinements}.
\newblock PhD thesis, University of California, 2010.

\bibitem[SDC07]{SDC07}
Lin Song, Yuxin Deng, and Xiaojuan Cai.
\newblock Towards automatic measurement of probabilistic processes.
\newblock In {\em Proc.~QSIC'07}, pages 50--59. IEEE, 2007.

\bibitem[Seg95]{Seg95a}
Roberto Segala.
\newblock {\em Modeling and Verification of Randomized Distributed Real-Time
  Systems}.
\newblock PhD thesis, MIT, 1995.

\bibitem[vBSW08]{BSW08}
Franck van Breugel, Babita Sharma, and James Worrell.
\newblock Approximating a behavioural pseudometric without discount for
  probabilistic systems.
\newblock {\em Logical Methods in Computer Science}, 4(2), 2008.

\bibitem[vBW01a]{BW01a}
Franck van Breugel and James Worrell.
\newblock An algorithm for quantitative verification of probabilistic
  transition systems.
\newblock In {\em Proc.~CONCUR'01}, volume 2154 of {\em LNCS}, pages 336--350.
  Springer, 2001.

\bibitem[vBW01b]{BW01b}
Franck van Breugel and James Worrell.
\newblock Towards quantitative verification of probabilistic transition
  systems.
\newblock In Fernando Orejas, Paul Spirakis, and Jan van Leeuwen, editors, {\em
  Automata, Languages and Programming}, volume 2076 of {\em LNCS}, pages
  421--432. Springer, 2001.

\bibitem[vBW05]{BW05}
Franck van Breugel and James Worrell.
\newblock A behavioural pseudometric for probabilistic transition systems.
\newblock {\em Theor. Comput. Sci.}, 331(1):115--142, 2005.

\bibitem[Yin02]{Yin02}
M.~Ying.
\newblock Bisimulation indexes and their applications.
\newblock {\em Theoretical Computer Science}, 275(1/2):1--68, 2002.

\end{thebibliography}

\newpage
\appendix
%==========================================================================
\section{Convex Bisimulation Metric}
For $\Pi \subseteq \dist{S}$ we denote by $\cc(\Pi)$ the convex closure of $\Pi$. If $\Delta\in\cc(\der(s,a))$ then we say $\Delta$ is a combined transition of $s$ labelled by $a$, written as $s\trans[a]_c \Delta$.
\begin{definition}[Convex bisimulation metric]
A $1$-bounded metric $d$ on $S$ is a \emph{convex bisimulation metric} if for all $s,t\in S$ and $\epsilon \in [0,1)$ with $d(s,t) \le \epsilon$, if $s \trans[a] \Delta$ then there exists a $\Delta' \in \cc(\der(t,a))$ such that $\Kantorovich(d)(\Delta,\Delta') \le \epsilon$.
\end{definition}
The smallest (wrt. $\sqsubseteq$) convex bisimulation metric, denoted by $\bisimdconvex$, is called \emph{convex bisimilarity metric}. Convex bisimilarity equivalence (also called probablistic bisimilarity) \cite{Seg95a} is the kernel of the convex bisimilarity metric.

Let us define the functor $\functorconvex\colon [0,1]^{S \times S} \to [0,1]^{S \times S}$ for $d\colon S \times S \to [0,1]$ and $s,t\in S$ by
\[
	\functorconvex(d)(s,t) = \sup_{a\in \Act} \{ \Hausdorff(\Kantorovich(d))(\cc(\der(s,a)), \cc(\der(t,a))) \}\, .
\]
It can be shown that $\functorconvex$ is monotone and its least fixed point is the convex bisimilarity metric.
%\begin{proposition}
%The convex bisimilarity metric is the least fixed point of $\functorconvex$.
%\end{proposition}
%\subsection{Logical characterisation of convex bisimulation metric}

Given a pLTS, we can saturate it with all possible combined transitions. In the saturated pLTS, convex bisimulation metric coincides with state-based bisimulation metric because of the following lemma.

\begin{lemma}
 $d$ is a convex bisimulation metric if and only if for all $s,t\in S$ and $\epsilon \in [0,1)$ with $d(s,t) \le \epsilon$, if $s\trans[a]_c \Delta$
 then there exists some $t\trans[a]_c \Delta'$ such that $\Kantorovich(d)(\Delta,\Delta') \le \epsilon$.
\end{lemma}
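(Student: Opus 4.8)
The plan is to prove the two implications of the equivalence separately. The forward (``if'') direction will be essentially immediate, while the backward (``only if'') direction reduces to a standard convexity property of the Kantorovich metric together with convexity of $\cc(\der(t,a))$.

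For the ``if'' direction I would argue as follows. Suppose $d$ satisfies the stated condition phrased with combined transitions on both sides. Given $s,t$ with $d(s,t)\le\epsilon$ for some $\epsilon\in[0,1)$ and an ordinary transition $s \trans[a] \Delta$, observe that $\Delta \in \der(s,a) \subseteq \cc(\der(s,a))$, so $s \trans[a]_c \Delta$ as well; applying the hypothesis produces a combined transition $t \trans[a]_c \Delta'$, i.e. $\Delta'\in\cc(\der(t,a))$, with $\Kantorovich(d)(\Delta,\Delta')\le\epsilon$. This is exactly the defining clause of a convex bisimulation metric, so $d$ is one.

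For the ``only if'' direction I would start from a convex bisimulation metric $d$, fix $s,t,\epsilon$ with $d(s,t)\le\epsilon$, and take a combined transition $s \trans[a]_c \Delta$, writing $\Delta = \sum_i p_i\,\Delta_i$ as a finite convex combination of distributions $\Delta_i \in \der(s,a)$ (finite by image-finiteness, with $p_i>0$ and $\sum_i p_i = 1$). For each $i$ the definition of convex bisimulation metric, applied to the ordinary transition $s \trans[a] \Delta_i$, yields some $\Delta'_i \in \cc(\der(t,a))$ with $\Kantorovich(d)(\Delta_i,\Delta'_i)\le\epsilon$. I would then set $\Delta' = \sum_i p_i\,\Delta'_i$; since $\cc(\der(t,a))$ is convex and contains every $\Delta'_i$, we get $\Delta' \in \cc(\der(t,a))$, that is, $t \trans[a]_c \Delta'$. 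It then remains to verify $\Kantorovich(d)(\Delta,\Delta')\le\epsilon$.

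The core of the argument, and the only place any real work is needed, is the convexity of the Kantorovich metric: picking an optimal matching $\omega_i \in \Omega(\Delta_i,\Delta'_i)$ for each $i$, one checks in one line that $\omega := \sum_i p_i\,\omega_i$ has marginals $\Delta$ and $\Delta'$, hence $\omega\in\Omega(\Delta,\Delta')$, and that its transport cost is $\sum_i p_i\,\Kantorovich(d)(\Delta_i,\Delta'_i)\le\epsilon$ by linearity of the cost functional; therefore $\Kantorovich(d)(\Delta,\Delta')\le\epsilon$, which finishes the proof. I expect this convexity step to be the ``hard'' part only in the sense that it is the sole non-bookkeeping ingredient; the marginal check and the cost computation are both entirely routine.
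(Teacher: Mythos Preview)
Your proof is correct and follows the same overall decomposition as the paper: the ``if'' direction is immediate, and the ``only if'' direction reduces to the convexity inequality
\[
\Kantorovich(d)\Bigl(\sum_i p_i\Delta_i,\ \sum_i p_i\Delta'_i\Bigr) \le \sum_i p_i\,\Kantorovich(d)(\Delta_i,\Delta'_i).
\]
The only difference is in how this inequality is established. You prove it via the primal (transport) formulation, taking a convex combination of optimal matchings and reading off the bound from linearity of the cost. The paper instead uses the dual LP formulation: with a single feasible dual vector $(x_u)_{u\in S}$, the objective $\sum_u(\Delta(u)-\Delta'(u))x_u$ is linear in the pair $(\Delta,\Delta')$, so the maximum over a fixed feasible set is convex. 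Both arguments are one-liners; your primal argument is arguably more self-contained since it avoids invoking duality.
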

\begin{proof}
The ``if'' direction is trivial. The ``only if'' direction can be shown by making use of the following inequality:
\[\Kantorovich(d)(\sum_{i\in I} p_i\cdot\Delta_i, \sum_{i\in I} p_i\cdot\Delta'_i) 
~\leq \sum_{i\in I}p_i\cdot\Kantorovich(d)(\Delta_i,\Delta'_i)\]
for any pseudometric $d$. This holds because
\[\begin{array}{ll}
& \Kantorovich(d)(\sum_{i\in I} p_i\cdot\Delta_i, \sum_{i\in I} p_i\cdot\Delta'_i) \\
= & \max\{
\sum_{u\in S}((\sum_{i\in I}p_i\cdot\Delta_i)(u)-(\sum_{i\in I}p_i\cdot\Delta'_i)(u))x_u \mid  \\
& \hspace{6cm} x_u,\; x_{u'}\in [0,1] 
\wedge x_u-x_{u'}\leq d(u,u')\}\\
= & \max\{
\sum_{u\in S}\sum_{i\in I}p_i\cdot(\Delta_i(u)-\Delta'_i(u))x_u \mid x_u,\; x_{u'}\in [0,1] \wedge x_u-x_{u'}\leq d(u,u')\}\\
= & \max\{
\sum_{i\in I}p_i\sum_{u\in S}(\Delta_i(u)-\Delta'_i(u))x_u \mid x_u,\; x_{u'}\in [0,1] \wedge x_u-x_{u'}\leq d(u,u')\}\\
\leq & \sum_{i\in I}p_i\cdot \max\{\sum_{u\in S}(\Delta_i(u)-\Delta'_i(u))x_u \mid x_u,\; x_{u'}\in [0,1] \wedge x_u-x_{u'}\leq d(u,u')\}\\
= & \sum_{i\in I}p_i\cdot \Kantorovich(d)(\Delta_i,\Delta'_i)
\end{array}\]
\end{proof}

Consequently, a simple logical characterisation of convex bisimulation metric can be obtained by extending the logic in Section~\ref{sec:logic_characterizations} with infinitary conjunctions and by interpreting the diamand modality with combined transitions, i.e. $\sem{\dia{a}\psi}{s}=\max_{s \trans[a]_c \Delta} \sem{\psi}{\Delta}$. 
Write $\bisimdlogicconvex$ for the metric induced by this extended logic.

\begin{theorem}
$\bisimdconvex=\bisimdlogicconvex$
\hfill\qed
\end{theorem}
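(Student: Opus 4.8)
The plan is to replay, for combined transitions, the three lemmas behind the state-based characterisation $\bisimdstrong=\bisimdlogicstrong$. By the lemma above, $\bisimdconvex$ is the smallest $1$-bounded metric $d$ such that $d(s,t)\le\epsilon<1$ and $s\trans[a]_c\Delta$ force some $t\trans[a]_c\Delta'$ with $\Kantorovich(d)(\Delta,\Delta')\le\epsilon$; so $\bisimdconvex$ plays for $\trans[a]_c$ the exact role that $\bisimdstrong$ plays for $\trans[a]$ in Section~\ref{sec:bisim_metric}, while the extended logic (together with the companion distribution metric $\sup_{\psi}|\sem{\psi}{\Delta_1}-\sem{\psi}{\Delta_2}|$, both under combined-transition semantics) plays the role of $\bisimdlogicstrong,\bisimddlogicstrong$. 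I would prove the two inclusions $\bisimdlogicconvex\sqsubseteq\bisimdconvex$ and $\bisimdconvex\sqsubseteq\bisimdlogicconvex$ separately.

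For $\bisimdlogicconvex\sqsubseteq\bisimdconvex$ I would repeat the simultaneous structural induction of Lemma~\ref{lem:dldst}, proving $|\sem{\varphi}{s}-\sem{\varphi}{t}|\le\bisimdconvex(s,t)$ and $|\sem{\psi}{\Delta_1}-\sem{\psi}{\Delta_2}|\le\Kantorovich(\bisimdconvex)(\Delta_1,\Delta_2)$. Only two cases differ from Lemma~\ref{lem:dldst}. In case $\dia{a}\psi$, pick $\Delta_1\in\cc(\der(s,a))$ attaining $\sem{\dia{a}\psi}{s}$ — the maximum is attained, as $\der(s,a)$ is finite, so $\cc(\der(s,a))$ is a compact polytope and $\Delta\mapsto\sem{\psi}{\Delta}$ is upper semicontinuous — then apply the lemma above to get $t\trans[a]_c\Delta_2$ with $\Kantorovich(\bisimdconvex)(\Delta_1,\Delta_2)\le\bisimdconvex(s,t)$ and conclude by the induction hypothesis on $\psi$; the sub-case where one of $s,t$ cannot do $a$ is dispatched via $\dia{a}\top$ exactly as in Lemma~\ref{lem:dldst}. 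In the new case of an infinitary conjunction $\bigwedge_i\psi_i$, use the induction hypothesis on each $\psi_i$ together with the elementary bound $|\inf_i a_i-\inf_i b_i|\le\sup_i|a_i-b_i|$.

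For $\bisimdconvex\sqsubseteq\bisimdlogicconvex$ I would show that $\bisimdlogicconvex$ satisfies the combined-transition transfer condition, following Lemma~\ref{lem:stl}. First note that the proof of Lemma~\ref{lem:dist.logic} mentions no diamond modality, so its convex analogue holds: $\Kantorovich(\bisimdlogicconvex)(\Delta_1,\Theta)\le\sup_{\psi}|\sem{\psi}{\Delta_1}-\sem{\psi}{\Theta}|$, and the separating formula it produces has the right sign, so no negation of distribution formulae is needed. Now assume $\bisimdlogicconvex(s,t)\le\epsilon<1$, $s\trans[a]_c\Delta_1$ (as in Lemma~\ref{lem:stl}, $t$ can perform $a$, else $\bisimdlogicconvex(s,t)=1>\epsilon$ already), and, for a contradiction, that $\Kantorovich(\bisimdlogicconvex)(\Delta_1,\Theta)>\epsilon$ for every $\Theta\in\cc(\der(t,a))$. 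Since $\cc(\der(t,a))$ is compact and $\Theta\mapsto\Kantorovich(\bisimdlogicconvex)(\Delta_1,\Theta)$ is continuous, this map is bounded below by some $\epsilon'>\epsilon$; fix $\epsilon''\in(\epsilon,\epsilon')$, and for each $\Theta$ pick, by the convex analogue of Lemma~\ref{lem:dist.logic}, a distribution formula $\psi_\Theta$ with $\sem{\psi_\Theta}{\Delta_1}-\sem{\psi_\Theta}{\Theta}>\epsilon''$. Put $\varphi=\dia{a}\bigwedge_{\Theta\in\cc(\der(t,a))}(\psi_\Theta\ominus\sem{\psi_\Theta}{\Theta})$. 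Then $\sem{\varphi}{s}\ge\inf_\Theta\max(\sem{\psi_\Theta}{\Delta_1}-\sem{\psi_\Theta}{\Theta},0)\ge\epsilon''>\epsilon$, whereas for every $\Theta'\in\cc(\der(t,a))$ the conjunct with index $\Theta=\Theta'$ makes $\sem{\bigwedge_\Theta(\psi_\Theta\ominus\sem{\psi_\Theta}{\Theta})}{\Theta'}=0$, so $\sem{\varphi}{t}=0$. Hence $\bisimdlogicconvex(s,t)\ge\sem{\varphi}{s}-\sem{\varphi}{t}>\epsilon$, a contradiction; combining the two inclusions yields $\bisimdconvex=\bisimdlogicconvex$.

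The main obstacle is the passage from finitely many transitions to the continuum $\cc(\der(t,a))$ of combined transitions. This is what forces infinitary conjunction into the logic and, more subtly, a plain infinitary conjunction would only give $\sem{\varphi}{s}\ge\epsilon$ in the argument above; it is the compactness of the polytope $\cc(\der(t,a))$, via continuity of $\Kantorovich(\cdot)$, that upgrades this to the strict bound $\epsilon''>\epsilon$ the contradiction needs. Everything else is a mechanical adaptation of the arguments of Section~\ref{sec:bisim_metric}.
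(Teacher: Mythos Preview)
Your proposal is correct and follows exactly the route the paper has in mind: the theorem is stated with a bare \qed\ because the paper views it as an immediate consequence of the preceding lemma---saturate the pLTS with all combined transitions, observe that $\bisimdconvex$ on the original system coincides with $\bisimdstrong$ on the saturated one, and replay Lemmas~\ref{lem:dldst}, \ref{lem:dist.logic}, \ref{lem:stl} with $\trans[a]$ replaced by $\trans[a]_c$ and finite conjunction by infinitary conjunction. Your write-up simply unfolds this.

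Where you go beyond the paper is the compactness argument. The paper's Lemma~\ref{lem:stl} uses that $\der(t,a)$ is finite so that $\min_i(\sem{\psi^i_2}{\Delta_1}-\sem{\psi^i_2}{\Delta^i_2})$ is attained and strictly exceeds $\epsilon$; once one passes to the continuum $\cc(\der(t,a))$ this minimum becomes an infimum and could collapse to $\epsilon$. Your use of compactness of the polytope $\cc(\der(t,a))$ and continuity of $\Theta\mapsto\Kantorovich(\bisimdlogicconvex)(\Delta_1,\Theta)$ to extract a uniform margin $\epsilon'>\epsilon$ is precisely what is needed to make the paper's one-line reduction rigorous, and the paper does not spell this out. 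Likewise your remark that $\Delta\mapsto\sem{\psi}{\Delta}$ is only upper semicontinuous under infinitary conjunction (but that this suffices for the maximum in $\sem{\dia{a}\psi}{s}$ to be attained on the compact set $\cc(\der(s,a))$) is a detail the paper omits. So: same approach, but your version actually closes the gap that the infinitary conjunction opens.
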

% ==========================================================================
\section{Trace Metric}
% ==========================================================================

% --------------------------------------------------------------------------
%\subsection{Coinductive characterization}
% --------------------------------------------------------------------------
In this section we present a notion of trace metric that enjoys a straightforward logical characterisation.
% --------------------------------------------------------------------------
\subsection{Trace Metric}
% --------------------------------------------------------------------------
A trace $\tr$ is a string in the set $A^*$. We write $\epsilon$ for the empty trace, and $\tr_1\cdot\tr_2$ for the concatenation of two traces $\tr_1$ and $\tr_2$. Given a finitely branching pLTS, the maximum probability that state $s$ can perform trace $\tr$ is defined as follows.
\[
\Prob(s,\tr) ~=~ \left\{\begin{array}{ll}
1 & \mbox{if $\tr=\epsilon$}\\
\max_{s\trans[a]\Delta}\sum_{t\in S}\Delta(t)\cdot \Prob(t,\tr') \qquad\qquad & \mbox{if $\tr=a\cdot\tr'$}
\end{array}\right.
\]
\begin{definition}
For any two states $s,t\in S$, the \emph{trace distance} between them is given as follows:
\[
\tracedist(s,t)~=~ \sup_{\tr\in A^*}|\Prob(s,\tr)-\Prob(t,\tr)|
\]
\end{definition}
Intuitively, the trace distance between $s$ and $t$ is the maximal difference between the maximum probabilities given by $s$ and $t$ to a same trace.

\leaveout{
\begin{definition}[\protect{\cite[Def.5]{SDC07}}]
Let $u,u'$ be traces and c be a real number lying between 0 and 1.
\[
m_c(u,u')=\left\{
\begin{array}{ll}
c^{k-1} &{\it if}~~u[k]\neq u'[k] \ and \ u[i]=u '[i] \ {\it for}~(i<k)\\
0 &otherwise\\
\end{array}
\right.
\]
where $u[i]$ means the i-th action of the trace~$u$.
\end{definition}

Let $s$ be a state in a pLTS. We let $\tdists(s)$ be the set of trace distributions generated from $s$ as defined in \cite{LSV07}.
\begin{definition}[Trace Metric]
Let $c$ be a discount factor between $0$ and $1$. A $c$-discounted pseudometric is a \emph{trace metric} w.r.t. discount factor $c$ if for all $s,t\in S$, 
$$d_c(s,t) =\Hausdorff(\Kantorovich(m_c))(\tdists(s),\tdists(t)).$$
\end{definition}
} %end of leaveout
% --------------------------------------------------------------------------
%\subsection{Fixed point characterization}
% --------------------------------------------------------------------------

% --------------------------------------------------------------------------
\subsection{Logical Characterizations}
% --------------------------------------------------------------------------
Let $\logic^{T}$ be the set of formulae produced by the following grammar.
\[\varphi ::= \top \mid \dia{a}{\varphi} \]
A formula evaluates in a state $s$ as follows:
\[\begin{array}{l}
\sem{\top}{s}=1\\
\sem{\dia{a}{\varphi}}{s} = \max_{s\trans[a]\Delta}\sum_{t\in S}\Delta(t)\cdot\sem{\varphi}{t}
\end{array}\]
This logic induces a natural metric $\tracedistl$.
\[
\tracedistl(s,t)~=~\sup_{\varphi\in\mathbb{O}_{tr}} | \llbracket \varphi \rrbracket (s) - \llbracket \varphi \rrbracket (t) |
\]

\begin{theorem}
$\tracedist = \tracedistl$
\end{theorem}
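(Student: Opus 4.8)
The plan is to prove $\tracedist = \tracedistl$ by establishing the two inequalities $\tracedistl \sqsubseteq \tracedist$ and $\tracedist \sqsubseteq \tracedistl$ separately, in the style of the earlier proofs in the paper. The key observation is that each formula $\varphi \in \logic^T$ essentially encodes a single trace: by a trivial induction on the structure of $\varphi$, the formula $\dia{a_1}\dia{a_2}\cdots\dia{a_n}\top$ satisfies $\sem{\varphi}{s} = \Prob(s,\, a_1 a_2 \cdots a_n)$ for every state $s$. Indeed, the semantic clause $\sem{\dia{a}\varphi}{s} = \max_{s\trans[a]\Delta}\sum_{t}\Delta(t)\cdot\sem{\varphi}{t}$ is exactly the recursive clause defining $\Prob(s, a\cdot\tr')$ once we identify $\varphi$ with $\tr'$, and $\sem{\top}{s} = 1 = \Prob(s,\epsilon)$ provides the base case.

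Given this correspondence, the direction $\tracedistl \sqsubseteq \tracedist$ is immediate: every formula $\varphi$ of the grammar has the form $\dia{a_1}\cdots\dia{a_n}\top$ for some (possibly empty) sequence of actions, so $|\sem{\varphi}{s} - \sem{\varphi}{t}| = |\Prob(s,\tr) - \Prob(t,\tr)|$ for the corresponding trace $\tr$, and taking the supremum over all formulae cannot exceed the supremum over all traces. For the converse $\tracedist \sqsubseteq \tracedistl$, one runs the same argument backwards: given any trace $\tr = a_1\cdots a_n \in A^*$, form the formula $\varphi_{\tr} = \dia{a_1}\cdots\dia{a_n}\top \in \logic^T$, observe $|\Prob(s,\tr) - \Prob(t,\tr)| = |\sem{\varphi_{\tr}}{s} - \sem{\varphi_{\tr}}{t}| \le \tracedistl(s,t)$, and take the supremum over $\tr$. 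The only mild subtlety is the indexing set in the definition of $\tracedistl$, which is written $\mathbb{O}_{tr}$ rather than $\logic^T$; I would treat this as $\logic^T$ (a typo), since that is clearly the intended logic.

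Since the bijection between formulae and traces makes both supremum sets literally the same set of real numbers $\{\,|\Prob(s,\tr)-\Prob(t,\tr)| : \tr \in A^*\,\}$, the two metrics agree pointwise, which is the claim. There is essentially no obstacle here: unlike the bisimulation-metric characterisations, no Kantorovich duality, no bisimulation-witness construction, and no structural induction over a rich logic is needed — the logic $\logic^T$ is so restricted that it is in bijection with traces by construction. The one place to be slightly careful is the base/empty-trace case and making the induction identifying $\sem{\varphi}{s}$ with $\Prob(s,\tr)$ fully explicit, but this is a one-line structural induction. I would therefore present the proof as: (1) state and prove by induction the identity $\sem{\dia{a_1}\cdots\dia{a_n}\top}{s} = \Prob(s, a_1\cdots a_n)$; (2) conclude both inclusions and hence equality.
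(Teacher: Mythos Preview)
Your proposal is correct and matches the paper's own proof essentially verbatim: the paper also exhibits the bijection $f\colon A^* \to \logic^{T}$ with $f(\epsilon)=\top$ and $f(a\cdot\tr)=\dia{a}f(\tr)$, proves $\Prob(s,\tr)=\sem{f(\tr)}{s}$ by induction on the length of $\tr$, and concludes immediately. Your observation that $\mathbb{O}_{tr}$ should be read as $\logic^{T}$ is also right.
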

\begin{proof}
It is easy to see that there is a bijection $f$ between $A^*$ and $\logic^{T}$.
\[\begin{array}{rcl}
f(\epsilon) & = & \top \\
f(a\cdot\tr) & = & \dia{a}{f(\tr)}
\end{array}\]
For any state $s$ and trace $\tr$,  we can show that
$\Prob(s,\tr) = \sem{f(\tr)}{s}$
by induction on the length of $\tr$. The required result now easily follows.
\end{proof}

\leaveout{
Following the idea of \cite[Sec.~6.2]{BB08} we define $\mathbb{O}_{tr}$ as the negation-free $\mathbb{O}$ formulae. We get:

\begin{conjecture}
The trace metric is characterized by
\[
	\sup_{\varphi\in\mathbb{O}_{tr}} | \llbracket \varphi \rrbracket (s) - \llbracket \varphi \rrbracket (t) |
\]
for all $s,t\in S$.
\end{conjecture}
} %end of leaveout

\end{document}